\documentclass[hidelinks,onefignum,onetabnum]{siamart220329}
\usepackage{xcolor}
\usepackage{array}
\usepackage{tabularx}
\usepackage{braket}
\usepackage{graphicx}
\usepackage{geometry}
\usepackage{mathrsfs}
\usepackage{latexsym}
\usepackage{amsmath,amssymb,amsfonts}
\usepackage{mleftright}
\usepackage{mathtools}
\usepackage{hyperref}
\usepackage[numeric,initials]{amsrefs}
\usepackage{cite}
\usepackage{graphicx}
\usepackage{wrapfig}
\usepackage{fancybox}
\usepackage{bm}
\usepackage[all,pdftex,arc,curve,color,frame,poly, arrow]{xy}

\usepackage{mathdots}
\newtheorem{example}[theorem]{Example}

\usepackage{tikz-cd}
\usepackage{enumitem}

\usepackage{thmtools}

\usepackage{url}

\renewcommand{\epsilon}{\varepsilon}
\renewcommand{\phi}{\varphi}

\newcommand{\Sp}{\operatorname{Sp}}
\newcommand{\SL}{\operatorname{SL}}
\newcommand{\GL}{\operatorname{GL}}

\newcommand{\Or}{\operatorname{O}}

\newcommand{\U}{\operatorname{U}}
\newcommand{\SO}{\operatorname{SO}}

\newcommand{\CC}{\mathbb C}

\newcommand{\RR}{\mathbb R}
\newcommand{\PP}{\mathbb P}

\newcommand{\im}{\mathrm i}

\newcommand{\supp}{\operatorname{supp}}
\newcommand{\Stab}{\operatorname{Stab}}

\newcommand{\cS}{\mathcal S}

\newcommand{\cH}{\mathcal H}
\def\bw#1{{\textstyle\bigwedge^{\hspace{-.2em}#1}}}

\newcommand{\defi}[1]{\textsf{#1}}
\usepackage{color}

\title{Tensor decompositions with applications to LU and SLOCC equivalence of multipartite pure states}

\author{Luke Oeding\thanks{Department of Mathematics and Statistics, Auburn University, Auburn, AL (\email{\{oeding,yzt0060\}@auburn.edu}).} \and Ian Tan$^*$
}

\begin{document}

\maketitle
\begin{abstract}
We introduce a broad lemma, one consequence of which is the higher order singular value decomposition (HOSVD) of tensors (DeLathauwer et al. 2000). By an analogous application of the lemma, we find a complex orthogonal version of the HOSVD. Kraus's algorithm (Kraus 2010) used the HOSVD to compute normal forms of almost all $n$-qubit pure states under the action of the local unitary group. Taking advantage of the double cover $\SL_2(\CC)\times \SL_2(\CC)\to\SO_4({\CC})$, we produce similar algorithms (distinguished by the parity of $n$) that compute normal forms for almost all $n$-qubit pure states under the action of the SLOCC group.
\end{abstract}

\begin{keywords}
    	tensor decomposition, HOSVD, entanglement, quantum information
\end{keywords}

\begin{MSCcodes}
    15A69, 	81Pxx
\end{MSCcodes}

\section{Introduction}\label{sec:intro}
Orbit classification is ubiquitous in mathematics. A group $G$ acting on a set $\cS$ induces an equivalence relation separating elements in $\cS$ into equivalence classes, or \defi{orbits}. If the number of orbits is finite, we can hope to classify orbits by giving a complete list of representative elements, or \defi{normal forms}, from each orbit. Otherwise, the following possibilities are available:
\begin{itemize}
\item Give an effective algorithm that takes as input  $x\in\cS$ and computes the unique normal form of all points on its orbit.
    \item Exhibit a complete, non-redundant list of parameterized families of normal forms.
\end{itemize}
Usually a complete description of orbits is a lofty goal. In this case a natural first step is to find normal forms for almost all $x\in\cS$, i.e., for every $x$ in a full-measure subset of $\cS$. 

We shall be primarily interested in the tensor setting. Specifically, for $1\leq i\leq n$ let the subgroup $G_i\leq \GL_{d_i}(\CC)$ act on the $d_i$-dimensional complex vector space $V_i$. The group $G=G_1\times\dots\times G_n$ acts on $V_1\otimes\dots\otimes V_n$ by the natural representation 
\[
(g_1,\dots,g_n)\mapsto g_1\otimes\dots\otimes g_n,\quad \text{$g_i\in G_i$ for all $1\leq i\leq n$.}
\]

Examples of orbit classification problems for tensors arise in quantum information. In this field one studies information-processing tasks that can be carried out by performing operations and measurements on collections of qubits (which are the quantum analog of classical bits). The projective point $\Phi'\in \mathbb{P}(\CC^2)^{\otimes n}$ corresponding to the tensor $\Phi\in (\CC^2)^{\otimes n}$ describes the state of an \defi{$n$-qubit} system. Similarly, $\Phi'\in\mathbb{P}(\CC^3)^{\otimes n}$ represents an \defi{$n$-qutrit} state and in general $\Phi'\in\mathbb{P}(\CC^d)^{\otimes n}$ represents an \defi{$n$-qudit} state. Note that we will work with unnormalized representatives $\Phi\in (\CC^d)^{\otimes n}$ since results can be easily transferred to the projectivization. 

One classification problem of interest to us is for the action of the \defi{Local Unitary} (LU) group $\U_d^{\times n}$ on the $\RR$-vector space $(\CC^d)^{\otimes n}$. As the name suggests, two $n$-qudit states are LU equivalent if one is related to another via local unitary operations on the qubits. There is also the coarser classification under the SLOCC group of \defi{Stochastic Local Operations and Classical Communication}, that is, the action of the Cartesian product of special linear groups $\SL_d(\CC)^{\times n}$ on the $\CC$-vector space $(\CC^d)^{\otimes n}$. In this case, two $n$-qudit states are SLOCC equivalent if there is a nonzero probability of transforming one to the other by local operations and classical communication.

\subsection{Historical highlights}
Some highlights of orbit classification for the SLOCC and LU groups include the following:
\begin{enumerate}[leftmargin=*]
    \item[1.] \textit{Classification for $n=2$.}
    The two-qudit SLOCC (resp. LU) classification problem is equivalent to that of $\SL_d(\CC)\times\SL_d(\CC)$ (resp. $\U_d\times \U_d$) acting on the space of $d\times d$ matrices $\CC^{d\times d}$ by left and right multiplication. In the SLOCC case, this problem is solved by row and column reduction. If $M\in\CC^{d\times d}$ is invertible, the normal form of $M$ is $\sqrt[d]{\det (M)}\cdot I_d$. Otherwise, the normal form of $M$ is $\text{diag}(1,\dots,1,0,\dots,0)$ where the number of 1's that appear is equal to the rank of $M$. In the LU case, the problem is solved by the singular value decomposition of matrices, which gives real diagonal normal forms. This is also known as \defi{Schmidt decomposition} in the context of quantum information \cite[Thm.~2.7]{nielsen2002quantum}.
    \item[2.] \textit{LU classification for $n\geq 3$.}
    \begin{itemize} 
    \item[(a)]  3 qubits: This case was solved by the generalized Schmidt decomposition of Ac\'in et al. \cite{acin2000generalized}. As with singular value decomposition, this gives a single parameterized family of normal forms.
    \item[(b)] $n$ qudits: According to \cite[pp. 35-36]{ChtDjo:NormalFormsTensRanksPureStatesPureQubits} a complete classification of $4$-qubit states is unknown. Normal forms for \textit{almost all} $n$-qudit states can be found using the higher order singular value decomposition \cite{BKraus}.
    \end{itemize}
    \item[3.] \textit{SLOCC classification for $n\geq 3$. }
    \begin{itemize}
        \item[(a)] 3 qubits: This case is considered to be classical, but was re-introduced by Gelfand, Kapranov, and Zelevinski \cite[Ex.~4.5, p.~478]{GKZ} and famously introduced to the quantum community by  D\"ur, Vidal, and Cirac \cite{dur2000}. 
        \item[(b)] 4 qubits: Chterental and Djokovic classified the orbits for this case \cite{ChtDjo:NormalFormsTensRanksPureStatesPureQubits}. The first complete and irredundant classification was by Dietrich et al. \cite{dietrich2022classification}, whose methods followed Vinberg's method and work of Antonyan \cite{Antonyan, oeding2022translation}.         
        \item[(c)] 3 qutrits:  Most recently, Di Trani, de Graaf, and Marrani \cite{di2023classification} gave a classification of orbits over the real numbers, building on work of Nurmiev, who gave a classification of orbits over the complex numbers \cite{Nurmiev} using methods of Vinberg and Elasvili \cite{Vinberg-Elasvili}. We also note prior work of \cite{Thrall-Chanler} and \cite{Ng} which also considered this case.
        \item[(d)] After these cases the situation is considered widely open. In particular we know that the problem of classifying tensors \cite{belitskii2003complexity} and tensor diagrams \cite{turner2020finite} is in the ``wild'' category. However, this should not necessarily stop us from seeking classifications of orbits in small dimensions. We note that
        Dietrich's 4-qubit classification and Nurmiev's 3-qutrit classification rely on Vinberg's method and utilize connections to the exceptional Lie algebras $\mathfrak{e}_7$ and $\mathfrak{e}_6$ respectively. It is not clear how one should generalize, however Holweck and Oeding explored one possible direction in recent works \cite{HolweckOedingE8, holweck2023hyperdeterminant} that studied hyperdeterminants in these settings, and \cite{HolweckOeding22}, which attempted to generalize the Jordan decomposition to larger tensors.
    \end{itemize}
\end{enumerate}

\subsection{Our approach}The strategy of the present article is to provide a novel generalization to both the classical singular value decomposition (SVD) of matrices and higher order singular value decomposition (HOSVD) of tensors \cite{DeLathauwer_DeMoor_Vandewalle_2000}. The HOSVD is a tensor generalization of the matrix SVD. Significantly for us, the HOSVD can be used to compute LU normal forms for almost all tensors. On the other hand, Chterental and Djokovic \cite{ChtDjo:NormalFormsTensRanksPureStatesPureQubits} generalize the SVD with an analogous matrix factorization where the unitary group is replaced with the complex orthogonal group. Combining these ideas, we produce a complex orthogonal version of the HOSVD. The existence of a homomorphism $\SL_2(\CC)\times\SL_2(\CC)\to\SO_4(\CC)$ paired with orthogonal HOSVD leads to an algorithm that produces SLOCC normal forms for almost all $n$-qubit tensors $\Phi\in (\CC^2)^{\otimes n}$, just as Kraus's algorithm \cite{BKraus} achieved for the LU group. The situation is quite different depending on the parity of $n$.

\subsection{Wider context and applications}
Though the world of tensors and their applications is quite broad, for concreteness we have chosen to cast our results in the language of quantum information. We refer to \cite{LandsbergTensorBook} for general background on the algebra and geometry of tensors,  \cite{LandsbergQuantum, landsberg2019very, zhahir2022entanglement} for modern views on tensors in quantum information, and \cite{KoldaBader} for a classic overview on broader applications of tensor decompositions.

The problem of classifying orbits of tensors is closely related to tensor decompositions. Let $\U_{d_1}\times\dots\times\U_{d_n}$ act on $V=\CC^{d_1}\otimes\dots\otimes \CC^{d_n}$. If $\Phi\in V$ has normal form $\Omega\in V$, then we can write
\[
\Phi = (U_1\otimes\dots\otimes U_n)\Omega, \quad U_i\in \U_{d_i}\text{ for all }1\leq i\leq n.
\]
If we add the condition that $\Omega$ is ``all-orthogonal,'' then this equation expresses the HOSVD of tensors introduced by De Lathauwer et al. \cite{DeLathauwer_DeMoor_Vandewalle_2000}. The SVD of matrices is a special case of HOSVD, which is a special case of Tucker decomposition \cite{KoldaBader}. However, the HOSVD falls short of classifying unitary orbits since the so-called ``core tensor" $\Omega$ is not unique (see Section 4.3 in \cite{KoldaBader}). In \cref{sec:tpt} we discuss an algorithm introduced by Kraus \cite{BKraus} for finding unique $\Omega$ given general $\Phi$ when $V=\CC^2\otimes\dots\otimes \CC^2$. 

The SVD has numerous applications, including the solution to partial least squares problems, principal component analysis, latent variable learning for linear functions, and the classification of 2-qubit quantum states up to local unitary transformation. The HOSVD has also found numerous applications, including in brain science \cite{Deshpande2017}, data visualization \cite{navasca2015random}, genetics \cite{omberg2007tensor}, recommender systems \cite{frolov2017tensor}, and plant biodiversity \cite{bernardi2021high}, to name a few examples.

\subsection{Organization} Here is an overview of the organization of this article. We address our notational conventions in \cref{sec:notation}. In \cref{sec:OHOSVD} we discuss a generalized approach that allows us to obtain not only the well-known HOSVD as a consequence, but also the complex orthogonal HOSVD. After this we discuss normal forms for general qubits in \cref{sec:normal}. In particular, we describe the situation for the LU group in \cref{sec:tpt}, for the SLOCC group when $n$ is even in \cref{sec:sleven}, and for the SLOCC group when $n$ is odd in \cref{sec:odd}. 

\subsection{Notation and conventions}\label{sec:notation} We will write $\im$ for the imaginary unit and use the symbol ``$\leq$'' to indicate subgroup containment. The following matrices will appear throughout the text:
\[T:=\frac{1}{\sqrt{2}}\begin{pmatrix}
        1 & 0 & 0 & 1\\
        0 & \im & \im & 0\\
        0 & -1 & 1 & 0\\
        \im & 0 & 0 & -\im
    \end{pmatrix}\quad\text{and}\quad
    J:=
    \begin{pmatrix}
        0 & 1 \\
        -1 & 0
    \end{pmatrix}.
    \]
These matrices are related by the equation $T^\top T=J\otimes J$.

Let $n>1$ be a natural number. We write $\cH_n=(\CC^2)^{\otimes n}$ to denote the Hilbert space of (unnormalized) $n$-qubit state vectors. We follow the physics convention to represent a fixed basis of $\CC^2$ by $\ket 0, \ket{1}$. The induced basis vectors on the tensor product $\cH_n$ are written as
\[
\ket{\textbf{v}}=\ket {v_1 v_2 \dots v_n}=\ket{v_1}\otimes\ket{v_2}\otimes\dots\otimes\ket{v_n}
\]
for all $\textbf{v}=(v_1,\dots,v_n)\in\{0,1\}^n.$ Then the coordinates of a tensor $\Phi\in \cH_n$ in this basis are indexed by $n$-tuples $\textbf{v}\in\{0,1\}^n$ as in the equation
$\Phi = \sum_{\textbf{v}\in\{0,1\}^n} \Phi_{\textbf{v}} \ket{\textbf{v}}$, where $\Phi_{\textbf{v}}\in\CC$. 

For $1\leq i\leq n$ let $V_i$ be a $d_i$-dimensional complex vector space, let $\GL(V_i)$ denote the group of invertible linear operators on $V_i$, and let $G_i$ denote a group with a representation $G_i\to \GL(V_i)$. 
For any vector space $V$ let $V^*$ denote the dual vector space of linear functionals on $V$.
There are natural representations $ G_1\times\dots\times G_n\to \GL(V_1\otimes\dots\otimes V_n)$ and $G_1\times\dots\times G_n\to \GL(V_1\oplus\dots\oplus V_n)$  defined respectively by taking the Kronecker product and the direct sum of linear operators.
We write $G_1\otimes \dots\otimes G_n$ and $G_1\oplus \dots\oplus G_n$ as the respective images of these representations.

Given $\Phi\in V_1\otimes\dots\otimes V_n$, let $\Phi_{(i)}$ denote the $d_i\times \prod_{j\neq i} d_j$ matrix that is associated with the linear map
\begin{equation*}
(V_1\otimes\dots \otimes V_{i-1}\otimes V_{i+1}\otimes\dots\otimes V_n)^* \to V_i  
\end{equation*}
corresponding to $\Phi$ via a choice of bases. 
This matrix is called the mode-$i$ flattening in the literature.
The actions of $g=(g_1,\dots, g_n)\in G_1\times\dots\times G_n$ on $\Phi$ and on $\Phi_{(i)}$ are related by the equation
\begin{equation}\label{flatact}
(g.\Phi)_{(i)}= g_i\Phi_{(i)} (g_1\otimes\dots\otimes g_{i-1}\otimes g_{i+1}\otimes \dots\otimes g_n)^\top =:
g_i\Phi_{(i)}{\widehat{g_i}}^\top.
\end{equation}
Note the definition of $\widehat{g_i}$ in Eq.~\eqref{flatact} above.

\section{HOSVD and its complex orthogonal counterpart}\label{sec:OHOSVD}
In this section we cast the Higher Order Singular Value Decomposition (HOSVD) in a more general light. In \cref{lem:core} we discuss the existence of core elements associated with a suitable set of reduction maps $\pi_i$ (see \cref{def:reduction}). This approach allows us to combine the concepts of the complex orthogonal SVD \cite[Theorem~2.10]{ChtDjo:NormalFormsTensRanksPureStatesPureQubits} and the HOSVD to produce the (complex) orthogonal HOSVD.

\subsection{Reduction maps}\label{sec:reductions} 
Here we introduce some families of functions that satisfy an important property. We call these functions \textit{reduction maps}. They will play a key role in finding normal forms for various group actions.
\begin{definition}\label{def:reduction} Let $G_i$ be a group for $1\leq i\leq n$ and suppose $\cS$ is a $G_1\times\dots \times G_n$-set. We say that a function $\pi:\cS \to\cS_i$ to a $G_i$-set $\cS_i$ is a \defi{reduction map} if \[\pi((g_1,\dots,g_n).x)=g_i.\pi(x)\] for all $(g_1,\dots,g_n)\in G_1\times\dots\times G_n$ and $x\in\cS$.
\end{definition}
Many examples of reduction maps appear in the literature around SLOCC and LU equivalence. One reason for this is their ability to generate invariants. By the definition of a reduction map, any $G_i$-invariant function $f$ on $\cS_i$ pulls back to a $G_1\times\dots\times G_n$-invariant function $f\circ \pi$ on $\cS$. The reduction maps that will be useful for us are summarized in \cref{fig:maps}. Each map makes use of one of the following bilinear forms
\begin{itemize}
    \item{} [$\RR$-Hermitian] $(v,w)\mapsto v^* w$, where $(v,w)\in\CC^d\times\CC^d \cong \RR^{2d}\times \RR^{2d}$,
    \item{} [$\CC$-Orthogonal] $(v,w)\mapsto v^\top w$, where $(v,w)\in\CC^d\times\CC^d$,
    \item{} [$\CC$-Symplectic] $(v,w)\mapsto v^\top Jw$, where $(v,w)\in\CC^2\times \CC^2$.
\end{itemize}
The unitary group $\U_d$ consists of the invertible operators preserving the first bilinear form, the complex orthogonal group $\Or_d$ consists of the invertible operators preserving the second bilinear form, and the special linear group $\SL_2 =\Sp_2$ consists of the invertible operators preserving the third bilinear form. To see that the last claim is true, consider the equation
\begin{equation}\label{alt}
    A^\top JA= \det(A) J,\quad \forall A\in\GL_2.
\end{equation}

\begin{example}
Let $\cS=\CC^{d_1}\oplus \dots \oplus \CC^{d_n}$
be a direct sum of representations of groups $G_i\to\GL(\CC^{d_i})$. For each $1\leq i\leq n$ the projection $\pi_i\colon \cS\to \CC^{d_i}$ is a reduction map.
\end{example}
\begin{example}\label{ex:trout}
    (Reduced density matrix \cite{nielsen2002quantum}) Consider the natural action of $\U_{d_1}\times \U_{d_2}$ on $\cS = V\otimes W$, with $d_1=\dim V$ and $d_2=\dim W$. Let $\cS_1$ be the $\RR$-vector space of $d_1\times d_1$ Hermitian matrices considered as a $\U_{d_1}$-module by the conjugation action. Define $\pi\colon V\otimes W\to \cS_1$ by $\pi(\Phi)=\Phi_{(1)}\Phi_{(1)}^*$ for $\Phi\in V\otimes W$. By Eq.~\eqref{flatact}, if $(U_1,U_2)\in \U_{d_1}\times\U_{d_2}$ we have
    \[\pi((U_1,U_2).\Phi)=U_1\Phi_{(1)} U_2^\top(U_1\Phi_{(1)} U_2^\top )^*=U_1\Phi_{(1)}\Phi_{(1)}^*U_1^*=U_1.\pi (\Phi).\]
Thus $\pi$ is a reduction map. In this case, $\pi(\Phi)$ coincides with the reduced density matrix of $\rho=\Phi\Phi^*$ on the subsystem $V$, i.e. the partial trace $\pi(\Phi)=\text{tr}_W(\rho)$.
\end{example}

\begin{example}\label{ex:oo}
    Consider the natural action of $\Or_{d_1}\times \Or_{d_2}$ on $\cS = V\otimes W$, with $d_1=\dim V$ and $d_2=\dim W$. Let $\mathcal{S}_1\cong S^2 V$ be the space of $d_1\times d_1$ complex symmetric matrices considered as an $\Or_{d_1}$-module by the conjugation action. Define $\pi\colon V\otimes W\to \mathcal{S}_i$ by $\pi(\Phi)=\Phi_{(1)}\Phi_{(1)}^\top$ for $\Phi\in V\otimes W$. By \eqref{flatact}, if $(U_1,U_2)\in \U_{d_1}\times\U_{d_2}$ we have
    \[\pi((U_1,U_2).\Phi)=U_1\Phi_{(1)} U_2^\top(U_1\Phi_{(1)} U_2^\top )^\top=U_1\Phi_{(1)}\Phi_{(1)}^\top U_1^\top=U_1.\pi (\Phi).\]
    Thus $\pi$ is a reduction map. 
\end{example}

\begin{example}\label{ex:slt} 
    Consider the SLOCC module $\cS = \cH_n = V_1\otimes\dots\otimes V_n$ where each $V_i$ is a copy of $\CC^2$. If $n$ is odd, let $\mathcal{S}_i\cong S^2 V_i$ be the space of $2\times 2$ complex symmetric matrices. Otherwise if $n$ is even, let $\mathcal{S}_i\cong\bw{2} V_i$ be the space of $2\times 2$ complex skew-symmetric matrices. In either case $\mathcal{S}_i$ is an $\SL_2$-module by the action $A.M=AMA^\top$ for $A\in \SL_2$ and $M\in\cS_i$. For $1\leq i\leq n$ define $\pi_i:\cH_n\to\mathcal{S}_i$ by $\pi_i(\Phi)=\Phi_{(i)}J^{\otimes (n-1)}\Phi_{(i)}^\top$ for $\Phi\in\cH_n$. By Eq.~\eqref{flatact} and Eq.~\eqref{alt}, we have
    \begin{align*}
    \pi_i((A_1,\dots,A_{n}).\Phi)&=A_i\Phi_{(i)}\widehat{A_i}^{\top}J^{\otimes (n-1)}(A_i\Phi_{(i)}\widehat{A_i}^{\top})^\top \\
    &=A_i\Phi_{(i)}\widehat{A_i}^{\top}J^{\otimes (n-1)}\widehat{A_i} \Phi_{(i)}^\top A_i^\top\\
    &=A_i\Phi_{(i)}J^{\otimes (n-1)}\Phi_{(i)}^\top A_i^\top\\
    &= A_i.\pi_i(\Phi)
\end{align*}
where $A_1,\dots, A_n\in\SL_2$. Thus $\pi_i$ is a reduction map. 
Note that $J^{\otimes k}$ is symmetric when $k$ is even and skew-symmetric otherwise, which explains the two cases for $\cS_i$.

Since the determinant of $M\in\cS_i$ is an $\SL_2$ invariant, it pulls back to the SLOCC invariant $\det(\pi_i)$ which was used in \cite{MR3706585,Gour_2011}. When $n=5$, a calculation shows that these SLOCC invariants span the space of degree 4 invariants in $\CC[\cH_5]$.

Note that the space of $2\times 2$ skew-symmetric matrices is one-dimensional. That is, if $n$ is even, then $\pi_i(\Phi)$ is a scalar multiple of $J$ for all $\Phi\in\cH_n$. In this case the $\SL_2$-action on $\cS_i$, by Eq.~\eqref{alt}, is trivial, and hence the reduction map is not very useful.
\end{example}

\begin{table}
    \begin{center}
\renewcommand{\arraystretch}{1.5}
\begin{tabular}{l|l|l}
 reduction map & source and target & groups\\ 
 \hline
 $\pi:\Phi\mapsto\Phi_{(1)}\Phi_{(1)}^*$ & $\CC^{d}\otimes \CC^{e}\to \CC^{d}\otimes {\CC^{d}}^*$ & $\U_{d} \times\U_{e}$ ; $\U_d$ \\  
 
 $\pi:\Phi\mapsto\Phi_{(1)}\Phi_{(1)}^\top$ & $\CC^{d}\otimes \CC^{e}\to S^2\CC^{d}$ & $\Or_{d}\times\Or_{e}$ ; $\Or_d$ \\ 
 
 $\pi_i:\Phi\mapsto \Phi_{(i)}J^{\otimes (n-1)}\Phi_{(i)}^\top$ & $\cH_n\to S^2\CC^{2}\text{ or }\bw{2}\CC^{2}$ & $\SL_2^{\times n}$ ; $\SL_2$ \\  
 
 $\pi_{ij}:\Phi\mapsto T\Phi_{(ij)}J^{\otimes (n-2)}\Phi_{(ij)}^\top T^\top$ & $\cH_n\to S^2\CC^{4}\text{ or } \bw{2}\CC^{4}$ & $\SL_2^{\times n}$ ; $\SL_2\times \SL_2\to \SO_4$
\end{tabular}
\end{center}
    \caption{Reduction maps of  \cref{ex:trout,ex:oo,ex:slt,ex:li}.}
    \label{fig:maps}
\end{table}

\begin{example}\label{ex:li}
    Consider the SLOCC module $\cS = \cH_n= V_1\otimes \dots\otimes V_n$ where $n\geq 3$ and each $V_i$ is a copy of $\CC^2$. If $n$ is even, let $\cS_{ij} \cong S^2(V_i\otimes V_j)$ be the space of $4\times 4$ complex symmetric matrices. Otherwise if $n$ is odd, let $\cS_{ij}\cong\bw{2} (V_i\otimes V_j)$ be the space of $4\times 4$ complex skew-symmetric matrices. The $4\times 4$ unitary matrix $T$ used in \cite{Verstraete_Dehaene_De_Moor_Verschelde_2002, ChtDjo:NormalFormsTensRanksPureStatesPureQubits} provides the following isomorphism.
    \begin{equation}\label{iso}
    \begin{gathered}
    \xymatrix@R-1.5pc{
    \SL_2\otimes\SL_2 \ar[r] & \SO_4\\
    A\otimes B\ar@{|->}[r] & T(A\otimes B)T^*
}
\end{gathered}
\end{equation}
We consider $\cS_{ij}$ an $\SL_2\times\SL_2$-module by the representation
\begin{equation}\label{eq:symrep}
\begin{gathered}\xymatrix@R-1.5pc{
		\SL_2\times\SL_2 \ar[r]& {\SO_4} \ar[r]^\rho& {\GL(\cS_{ij})}
\\
(A,B)\ar@{|->}[r] & T(A\otimes B)T^* \ar@{|->}[r] &\rho(T(A\otimes B)T^*)
}\end{gathered}
\end{equation}
where $\rho$ is defined by the conjugation action. For each $1\leq i<j\leq n$ let $\Phi_{(ij)}$ be the $4\times 2^{n-2}$ matrix corresponding to the linear map
\[
(V_1\otimes\dots \otimes V_{i-1}\otimes V_{i+1}\otimes\dots\otimes V_{j-1}\otimes V_{j+1}\otimes\dots\otimes V_n)^* \to V_i\otimes V_j.
\]
Finally, define $\pi_{ij}:V_1\otimes\dots\otimes V_n\to \cS_{ij}$ 
 by $\pi_{ij}(\Phi)=T\Phi_{(ij)}J^{\otimes {(n-2)}}\Phi_{(ij)}^\top T^\top$. By Eq.~\eqref{flatact} and Eq.~\eqref{alt}, we have
 \begin{align*}
\pi_{ij}((A_1\otimes\dots\otimes A_n)\Phi) &=T(A_i\otimes A_j)\Phi_{(ij)}\widehat{A_{ij}}^{\top} J^{\otimes (n-2)}((A_i\otimes A_j)\Phi_{(ij)}\widehat{A_{ij}}^{\top})^\top T^\top\\
&=T(A_i\otimes A_j)\Phi_{(ij)}\widehat{A_{ij}}^{\top} J^{\otimes (n-2)}(\widehat{A_{ij}}^{\top})^\top \Phi_{(ij)}^\top (A_i\otimes A_j)^\top T^\top\\
&=T(A_i\otimes A_j)T^* (T\Phi_{(ij)} J^{\otimes (n-2)} \Phi_{(ij)}^\top T^\top ) {T^*}^\top (A_i\otimes A_j)^\top T^\top\\
&=T(A_i\otimes A_j)T^* (T\Phi_{(ij)} J^{\otimes (n-2)} \Phi_{(ij)}^\top T^\top ) ({T} (A_i\otimes A_j) T^*)^\top\\
&= (A_i,A_j).\pi_{ij}(\Phi)
\end{align*}
where $A_1,\dots,A_n\in\SL_2$ and $\widehat{A_{ij}}=\otimes_{k\notin\{i,j\}} A_k$.
Therefore $\pi_{ij}$ is a reduction map via an isomorphism $\SL_2^{\times n} \to (\SL_2\times \SL_2)\times \SL_2^{\times (n-2)}$ rearranging factors of the group:
\[
(A_1,\dots,A_n)\mapsto ((A_i,A_j),A_1,\dots,A_{i-1}, A_{i+1},\dots A_{j-1}, A_{j+1},\dots,A_n).
\]

This reduction map was considered by Li \cite{LiSLOCCclassification}. By the discussion above, if $\Phi,\Phi'\in\cH_n$ are in the same SLOCC orbit, then $\pi_{ij}(\Phi)$ and $\pi_{ij}(\Phi')$ are in the same $\SO_4$-orbit by conjugation. Li showed the weaker claim that if $\Phi,\Phi'\in\cH_n$ are in the same SLOCC orbit, then $\pi_{ij}(\Phi)$ and $\pi_{ij}(\Phi')$ have the same Jordan normal form.
\end{example}

\subsection{Core elements} \label{sec:core}
In the HOSVD there is a notion of ``core tensor'' which takes the place of the diagonal matrix in the SVD matrix factorization. Our generalization of this notion (defined in \cref{lem:core} below) does not rely on tensor properties or linearity. Therefore we call it a ``core element,'' though we primarily have tensors in mind.
\begin{definition}
    Suppose a group $G$ acts on a set $\cS$. A \defi{normal form function} for the action is a function $F\colon \cS\to \cS$ such that, for any $x,y\in \cS$
    \begin{itemize}
        \item $x$ and $F(x)$ are in the same $G$-orbit, and
        \item $F(x)=F(y)$ if $x$ and $y$ are in the same $G$-orbit.  
    \end{itemize}
\end{definition}
Of course, normal-form functions always exist by the axiom of choice. One may view the orbit classification problem for $G$ acting on $\cS$ as the problem of finding a ``nice'' or easily computable normal form function on $\cS$.

\Cref{lem:core} is a ``workhorse'' lemma. We use it and the associated algorithm (see the comments after the proof) throughout the rest of the paper. Specifically, \cref{lem:core} is used in the proofs of \cref{thm:HOSVD,thm:OHOSVD,thm:sleven,lem:preodd}.
\begin{lemma}[Existence and Uniqueness of Core Elements]\label{lem:core}
    Let $G_1,\dots,G_n$ be groups. Suppose $G_1\times\dots\times G_n$ acts on a set $\cS$ and for each $1\leq i\leq n$ there exists a reduction map $\pi_i\colon \cS\to \cS_i$ to a $G_i$-set $\cS_i$:
    \begin{equation}\label{flat}
        \pi_i((g_1,\dots,g_n).x)=g_i.\pi_i(x)\quad \text{for all $g_i\in G_i, x\in\cS$}.
    \end{equation}
    Fix a normal form function $F_i\colon \cS_i\to \cS_i$ for the $G_i$-action on $\cS_i$. Then for each $x\in \cS$ there exists a \defi{core element} $\omega\in \cS$, which is defined by the properties:
    \begin{itemize}
        \item $x=(g_1,\dots, g_n).\omega$ for some $g_i\in G_i$, and
        \item $\pi_i(\omega)=F_i(\pi_i(\omega))$ for all $1\leq i\leq n$.
    \end{itemize}
    Moreover, the core element is unique up to the action of $H_1\times\dots \times H_n\leq G_1\times\dots\times G_n$, where
    $H_i=\{g\in G_i:g.\pi_i(\omega)=\pi_i(\omega)\}$
    is the stabilizer subgroup of $\pi_i(\omega)$.
\end{lemma}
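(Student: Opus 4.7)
My plan is to prove existence by an inductive ``one factor at a time'' normalization, and prove uniqueness by comparing two core elements through the reduction maps. The key mechanism on which both parts rely is the observation that acting only in the $i$-th slot of $G_1\times\dots\times G_n$ moves $\pi_i(x)$ by $G_i$ while leaving every $\pi_j(x)$ with $j\neq i$ unchanged; this is exactly the content of the equivariance \eqref{flat}.

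For existence, set $x_0=x$ and construct $x_1,\dots,x_n$ inductively with the property that $\pi_j(x_i)$ is fixed by $F_j$ for every $j\le i$. Given $x_{i-1}$, the normal-form hypothesis on $F_i$ supplies some $g_i\in G_i$ with $g_i.\pi_i(x_{i-1})=F_i(\pi_i(x_{i-1}))$. Define $x_i=(e,\dots,e,g_i,e,\dots,e).x_{i-1}$, with $g_i$ in the $i$-th slot. By \eqref{flat}, $\pi_i(x_i)=g_i.\pi_i(x_{i-1})=F_i(\pi_i(x_{i-1}))$. Applying $F_i$ and using the fact that $F_i\circ F_i=F_i$ (an immediate consequence of the two bullets in the definition of a normal form function) shows that $\pi_i(x_i)$ is $F_i$-fixed. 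For $j<i$ the equivariance gives $\pi_j(x_i)=e.\pi_j(x_{i-1})=\pi_j(x_{i-1})$, so the $F_j$-fixed property is inherited. Taking $\omega:=x_n$ and collecting the $g_i$ into a single tuple of $G_1\times\dots\times G_n$ yields $x=(g_1^{-1},\dots,g_n^{-1}).\omega$ together with all required fixedness conditions.

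For uniqueness, suppose $\omega$ and $\omega'$ are both core elements produced from $x$. They lie in a common $G_1\times\dots\times G_n$-orbit, so there exist $h_i\in G_i$ with $\omega'=(h_1,\dots,h_n).\omega$. Equivariance gives $\pi_i(\omega')=h_i.\pi_i(\omega)$. Since $\pi_i(\omega')$ and $\pi_i(\omega)$ then lie in the same $G_i$-orbit and are both fixed by $F_i$, the second bullet in the normal-form definition forces $\pi_i(\omega')=F_i(\pi_i(\omega'))=F_i(\pi_i(\omega))=\pi_i(\omega)$, so $h_i$ stabilizes $\pi_i(\omega)$, i.e.\ $h_i\in H_i$. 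This is the asserted uniqueness up to $H_1\times\dots\times H_n$. I do not anticipate a serious obstacle: the only nontrivial observation is the non-interference of the successive normalization steps, which is immediate from \eqref{flat}; everything else is bookkeeping. The iterative structure of the existence argument is also precisely the algorithm the authors will later exploit to compute HOSVD-type normal forms.
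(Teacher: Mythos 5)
Your proof is correct and follows essentially the same route as the paper: the equivariance \eqref{flat} is the key mechanism for both existence and uniqueness, and your sequential normalization produces exactly the same group elements $g_i$ (since $\pi_i(x_{i-1})=\pi_i(x)$, the slots do not interfere) as the paper's simultaneous construction $\omega=(g_1^{-1},\dots,g_n^{-1}).x$. The uniqueness argument is likewise the same, merely comparing the two core elements directly rather than through $x$.
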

\begin{proof}
    Let $x\in\cS$ and for each $i$ write $N_i=F_i(\pi_i(x))$. By definition $N_i$ and $\pi_i(x)$ are in the same orbit, hence there exists $g_i\in G_i$ such that $\pi_i(x)=g_i.N_i$. Let $\omega=(g_1^{-1},\dots, g_n^{-1}).x$. Then by Eq.~\eqref{flat},
\[
\pi_i(\omega)=g_i^{-1}.\pi_i(x)=g_i^{-1}g_i.N_i=N_i.\]
    Since $x$ and $\omega$ are in the same orbit, by Eq.~\eqref{flat} so are $\pi_i(x)$ and $\pi_i(\omega)$. Then $N_i=F_i(\pi_i(\omega))$, hence $\omega$ is a core tensor for $x$. This proves existence.
    
    Now for uniqueness, suppose $\xi$ is another core element for $x$, i.e. $x=(h_1,\dots, h_n).\xi$ for some $h_i\in G_i$ and $\pi_i(\xi)=N_i$ for all $i$. Then by Eq.~\eqref{flat},
\[
h_i.N_i=h_i.\pi_i(\xi)=\pi_i(x)=g_i.\pi_i(\omega)=g_i.N_i
\]
    so $h_i^{-1}g_i$ fixes $N_i$ and $(h^{-1}_1g_1,\dots, h^{-1}_ng_n).\omega=\xi$. That is, $\xi$ is in the $H_1\times\dots\times H_n$-orbit of $\omega$.
\end{proof}

The proof of \cref{lem:core} also gives us an algorithm to compute the core element $\omega$, supposing that for each $1\leq i\leq n$ and $x_i\in\cS_i$ one already has a way to find $g_i\in G_i$ such that $g_i.x_i$ is the normal form $F_i(x_i)$. Given input $x\in\cS$, the steps are:
\begin{enumerate}
    \item[1.] For each $1\leq i\leq n$ find $g_i\in G_i$ such that $F_i(\pi_i(x))=g_i.\pi_i(x)$.
    \item[2.] Set $\omega = (g_1^{-1},\dots,g_n^{-1}).x$.
\end{enumerate}
This fact is used to construct \cref{alg:OHOSVD,alg:slocc-odd} later in the text.

\begin{proposition}\label{core2}
    Elements $x,y\in \cS$ have equivalent core elements if and only if $x$ and $y$ are in the same $G_1\times\dots\times G_n$-orbit.
\end{proposition}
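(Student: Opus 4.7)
The plan is to obtain both implications as direct consequences of \cref{lem:core}. The only subtlety is the meaning of ``equivalent core elements''; I read it as lying in a common $H_1\times\dots\times H_n$-orbit, where $H_i$ is the stabilizer subgroup appearing in the uniqueness clause of \cref{lem:core}. With this reading, both directions are essentially formal manipulations of the two defining properties of a core element.

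For the ($\Leftarrow$) direction I would start from $y = g.x$ with $g\in G_1\times\dots\times G_n$ and fix any core element $\omega$ of $x$, so that $x = h.\omega$ for some $h\in G_1\times\dots\times G_n$ and $\pi_i(\omega)=F_i(\pi_i(\omega))$ for each $i$. Writing $y = (gh).\omega$ and noting that the reduction conditions $\pi_i(\omega)=F_i(\pi_i(\omega))$ are properties of $\omega$ alone (so they are unaffected by passing from $x$ to $y$), we conclude that the same $\omega$ is simultaneously a core element of $x$ and of $y$. The uniqueness clause of \cref{lem:core} applied to $y$ then shows that every core element of $y$ is $H_1\times\dots\times H_n$-equivalent to $\omega$, and hence to every core element of $x$.

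For the ($\Rightarrow$) direction I would take core elements $\omega_x$ of $x$ and $\omega_y$ of $y$ that are equivalent, meaning $\omega_x = (h_1,\dots,h_n).\omega_y$ for some $(h_1,\dots,h_n)\in H_1\times\dots\times H_n \subseteq G_1\times\dots\times G_n$. By the first defining property of core elements we may write $x=(g_1,\dots,g_n).\omega_x$ and $y=(g'_1,\dots,g'_n).\omega_y$ with $g_i,g'_i\in G_i$. Substituting the expression for $\omega_y$ and composing gives
\[
x = (g_1 h_1 {g'_1}^{-1},\,\dots,\,g_n h_n {g'_n}^{-1}).y,
\]
so $x$ and $y$ lie in the same $G_1\times\dots\times G_n$-orbit.

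I do not anticipate a real obstacle: both directions are chases through the definitions supplied by \cref{lem:core}. The one thing to double-check is the claim (used in the forward direction) that a core element of $x$ remains a core element of any $G_1\times\dots\times G_n$-translate of $x$; this is immediate because the reduction conditions $\pi_i(\omega)=F_i(\pi_i(\omega))$ are intrinsic to $\omega$ and do not mention $x$ at all.
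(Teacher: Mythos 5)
Your proof is correct and follows essentially the same route as the paper: the forward direction is a direct composition of group elements, and the converse observes that a core element of $x$ (whose defining conditions $\pi_i(\omega)=F_i(\pi_i(\omega))$ are intrinsic to $\omega$) is also a core element of $y=g.x$, after which the uniqueness clause of \cref{lem:core} finishes the argument. Your reading of ``equivalent core elements'' as lying in a common $H_1\times\dots\times H_n$-orbit matches the paper's intent.
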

\begin{proof}
    It is clear that if $x$ and $y$ have equivalent core elements, then they are in the same orbit. Conversely, suppose $(g_1,\dots, g_n).x=y$ for some $g_i\in G_i$. Set $N_i=F_i(\pi_i(x))=F_i(\pi_i(y))$. A core element $\omega$ for $x$ has the properties $x=(h_1,\dots, h_n).\omega$ for some $h_i\in G_i$ and $\pi_i(\omega)=N_i$ for all $i$. Then $\omega$ is also a core element for $y$, since $y=(g_1h_1,\dots, g_nh_n).\omega$ and $\pi_i(\omega)=N_i$. By the uniqueness of core elements in \cref{lem:core}, we are done.
\end{proof}

We wish to describe an analogous version of the HOSVD where the complex orthogonal group takes the role of the unitary group. To that end, let us recall two analogous matrix factorizations.
\begin{proposition}[Spectral theorem \cite{Gantmacher}]
     Every $d\times d$ Hermitian matrix $A$ has a factorization $A=UD U^*$ where $U\in \U_d$ and $D$ is diagonal with real entries.
\end{proposition}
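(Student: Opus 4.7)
The plan is to prove the result by induction on the dimension $d$. The base case $d=1$ is trivial, since every $1\times 1$ Hermitian matrix is a real scalar, and we may take $U=(1)$ and $D=A$.

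For the inductive step, I would first observe that every eigenvalue of a Hermitian $A$ is real. Since $\CC$ is algebraically closed, the characteristic polynomial $\det(A-\lambda I)$ has a root $\lambda\in\CC$ with nonzero eigenvector $v$. Pairing $Av=\lambda v$ against $v$ using the Hermitian form and the identity $A=A^*$ yields
\[
\lambda\, v^* v = v^* A v = (Av)^* v = \bar\lambda\, v^* v,
\]
so $\lambda\in\RR$. Normalizing $v$ so that $\|v\|=1$ and completing to an orthonormal basis $v,w_2,\ldots,w_d$ of $\CC^d$ via Gram-Schmidt, I would form the unitary matrix $U_1\in\U_d$ whose columns are these vectors.

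Then the first column of $U_1^*AU_1$ equals $U_1^*Av = U_1^*(\lambda v)=\lambda e_1$, so its $(1,1)$-entry is $\lambda$ and its remaining first-column entries vanish. Because $U_1^*AU_1$ is again Hermitian (conjugation by a unitary preserves the Hermitian property), its first row also vanishes beyond the $(1,1)$-entry, giving a block decomposition
\[
U_1^* A U_1 \;=\; \begin{pmatrix} \lambda & 0 \\ 0 & A' \end{pmatrix}
\]
with $A'\in\CC^{(d-1)\times(d-1)}$ Hermitian. Applying the inductive hypothesis, write $A'=U'D'(U')^*$ with $U'\in\U_{d-1}$ and $D'$ real diagonal, and set
\[
U := U_1\,\diag(1,U') \in \U_d, \qquad D := \diag(\lambda,D').
\]
Then a direct computation gives $A=UDU^*$, completing the induction.

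The only nontrivial point is the reality of the eigenvalue, which is immediate from the short computation above; everything else is routine linear algebra (existence of an eigenvector, Gram-Schmidt, and the block-diagonal reduction forced by Hermiticity).
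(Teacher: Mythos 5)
Your proof is correct. The paper does not supply its own argument for this proposition---it is stated as a classical fact with a citation to Gantmacher---so there is nothing to diverge from; your induction-by-deflation argument (reality of eigenvalues via $\lambda v^*v=\bar\lambda v^*v$, unitary completion of a normalized eigenvector, the forced block form $U_1^*AU_1=\lambda\oplus A'$ from Hermiticity, and reassembly with $U=U_1\,\diag(1,U')$) is the standard textbook proof and every step checks out.
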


\begin{proposition}[{Normal Forms for Complex Symmetric Matrices}{\cite[Section~3]{Gantmacher_1990a}}]\label{prop:od}
    Every $d\times d$ complex symmetric matrix $A$ has a factorization $A=UDU^\top $ where $U\in\Or_d$ and $D=S_{k_1}(\lambda_1)\oplus\dots\oplus S_{k_r}(\lambda_r)$, where $S_k(\lambda)$, $\lambda\in\CC$ is the $k\times k$ \defi{symmetrized Jordan block}
    \[S_k(\lambda)=\begin{pmatrix}
        \lambda & 1 & \cdots  & \cdots & 0\\
        1 & \lambda & \ddots &  & \vdots\\
        \vdots & \ddots & \ddots  & \ddots & \vdots\\
        \vdots &  & \ddots & \lambda & 1\\
        0 & \cdots & \cdots & 1 & \lambda\\
    \end{pmatrix}+
    \im\begin{pmatrix}
        0 & \cdots & \cdots  & 1 & 0\\
        \vdots & & \iddots & 0 & -1\\
        \vdots & \iddots & \iddots  & \iddots & \vdots\\
        1 &  0& \iddots & & \vdots\\
        0 & -1 & \cdots & \cdots & 0\\
    \end{pmatrix}.
    \]
The symmetrized Jordan block $S_k(\lambda)$ is similar to the standard $k\times k$ Jordan block with diagonal entries equal to $\lambda$.
\end{proposition}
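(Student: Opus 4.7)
The plan is to verify the parenthetical claim first and then establish the existence of the decomposition by induction on $d$, reducing at each step to the case of a single nilpotent symmetric block. For the parenthetical, set $N := S_k(\lambda) - \lambda I$, which is the sum of a real nilpotent tridiagonal matrix with ones on the super- and sub-diagonals and an imaginary anti-diagonal matrix. A direct computation shows $N^k = 0$ while $N^{k-1} \neq 0$, so $N$ is a $k \times k$ nilpotent matrix of nilpotency index $k$. Such a matrix admits a cyclic vector and is therefore similar to $J_k(0)$, which gives the claimed similarity of $S_k(\lambda)$ to $J_k(\lambda)$ after shifting by $\lambda I$.

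For the existence claim, I would induct on $d$. Decompose $\CC^d$ into generalized eigenspaces $\bigoplus_\mu E_\mu$ for $A$. Because $A^\top = A$, the standard form $(v,w) \mapsto v^\top w$ pairs distinct generalized eigenspaces to zero and restricts non-degenerately to each $E_\mu$; assembling orthogonal bases of the individual $E_\mu$'s thus reduces the problem to the single-eigenvalue case $A = \lambda I + N$ with $N$ complex symmetric and nilpotent on a non-degenerate symmetric space. Next, choose a vector $v$ of maximal nilpotency index $k$ satisfying additionally $v^\top N^{k-1} v \neq 0$; such $v$ exists because both conditions hold generically (the quadratic form $v \mapsto v^\top N^{k-1} v$ is non-zero since $N^{k-1}$ is a non-zero symmetric matrix). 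The Gram matrix of the cyclic basis $v, Nv, \dots, N^{k-1} v$ has entries $v^\top N^{i+j} v$, which vanish for $i + j \geq k$, so it is a non-singular Hankel matrix with $v^\top N^{k-1} v$ along its main anti-diagonal. Consequently the $N$-cyclic subspace $W$ spanned by this basis is $A$-invariant with non-degenerate restriction of the form, giving an orthogonal decomposition $\CC^d = W \oplus W^\perp$ in which $W^\perp$ is also $A$-invariant and the induction applies.

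The hard part is producing an orthonormal basis of $W$ in which $N|_W$ takes the form $S_k(0)$. Starting from the cyclic basis, one performs a sequence of upper-triangular changes of basis that preserve the $N$-cyclic structure while simultaneously orthonormalizing the Gram matrix; the imaginary anti-diagonal entries in the definition of $S_k(0)$ arise precisely because the Hankel vanishing pattern forces the orthonormal basis to involve combinations of the shape $\tfrac{1}{\sqrt{2}}(e_j \pm \im e_{k+1-j})$, which introduce exactly the imaginary correction appearing in the canonical form. This careful basis construction, carried out in detail by Gantmacher \cite[Section~3]{Gantmacher_1990a}, completes the proof.
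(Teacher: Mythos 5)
The paper does not actually prove \cref{prop:od}; it is imported verbatim from Gantmacher, so there is no internal argument to compare yours against. That said, your outline is the standard one and the reductions you spell out are sound: the form $(v,w)\mapsto v^\top w$ does annihilate pairs of generalized eigenvectors for distinct eigenvalues (an induction on the sum of the nilpotency orders, generalizing \cref{lem:eigorth}), so it restricts nondegenerately to each generalized eigenspace; in the single-eigenvalue case the Gram matrix of $v,Nv,\dots,N^{k-1}v$ is indeed constant along anti-diagonals with the nonzero value $v^\top N^{k-1}v$ on the main anti-diagonal and zeros below it, hence invertible; and $W^\perp$ is $A$-invariant because $A$ is self-adjoint for the form. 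The existence of $v$ with $v^\top N^{k-1}v\neq 0$ is also correctly justified by polarization.

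Two soft spots remain if this is to stand as a proof rather than an annotated citation. First, ``a direct computation shows $N^k=0$ while $N^{k-1}\neq 0$'' is asserted, not performed; it is true (the $k=4$ example in \cref{sec:jordan} is consistent with it), but for general $k$ it needs either the explicit similarity between $S_k(0)$ and a multiple of $J_k(0)$ via $I\pm\im F$ ($F$ the flip matrix), or the observation that $S_k(0)$ is nilpotent of rank $k-1$, which forces a single Jordan block. Second, and more substantively, the only step at which the specific matrix $S_k(\lambda)$ enters --- orthonormalizing the cyclic basis so that the shift operator becomes $S_k(0)$ --- is exactly the step you delegate back to Gantmacher. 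Up to that point you have shown only that $A$ is $\Or_d$-congruent to a direct sum of self-adjoint operators acting cyclically on nondegenerate subspaces, which classifies the congruence class by Jordan type but does not yet produce the stated normal form. Closing this requires normalizing $v\mapsto p(N)v$ with $p(0)=1$ so that the Gram matrix becomes exactly the flip $F$, and then verifying that the combinations of the shape $\tfrac{1}{\sqrt{2}}(f_j\pm\im f_{k+1-j})$ you allude to are orthonormal and conjugate the shift into $S_k(0)$; since the proposition being proved \emph{is} Gantmacher's theorem, citing him for this step makes the argument circular rather than complete.
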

The spectral theorem gives diagonal normal forms for the action of $\U_d$ by conjugation on the space of Hermitian matrices, whereas \cref{prop:od} gives block diagonal normal forms for the action of $\Or_d$ by conjugation on the space of complex symmetric matrices. However, one has to fix an order on the diagonal entries or Jordan blocks. In the unitary case we can use the standard ordering on the reals. For the orthogonal case, we use the lexicographical (lex) order on $\CC$ together with comparing sizes of the blocks - more details in \cref{sec:HOSVD,sec:complexOrthoTensor}.

It is easy to see that \cref{prop:od} remains true if $\Or_d$ is replaced with $\SO_d$. If a $d\times d$ complex symmetric matrix $A$ has $d$ distinct eigenvalues, then \cref{prop:od} implies that $A=UDU^\top$ where $U\in\SO_d$ and $D=\text{diag}(\lambda_1,\dots.\lambda_d)$ is diagonal. Given such $A$, one can compute $U$ and $D$ with \cref{alg:orthoSVD}, which is very similar to the standard spectral theorem algorithm. Note that, since the matrix $U$ computed in step 1 has full rank, the diagonal matrix $U^\top U$ has nonzero diagonal entries. Then each $u_i^\top u_i$ in step 2 is not zero.

\begin{lemma}\label{lem:eigorth}
    Let $A\in \CC^{d\times d}$ be symmetric. Suppose $A$ has an eigenvector $v$ with eigenvalue $\lambda$ and an eigenvector $w$ with eigenvalue $\mu$. If $\lambda\neq \mu$, then $v^{\top} w=0$.
\end{lemma}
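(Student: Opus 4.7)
The plan is to mimic the standard argument for orthogonality of eigenvectors of a Hermitian matrix, but with the bilinear form $(v,w)\mapsto v^\top w$ in place of the Hermitian form, so that the symmetry condition $A^\top = A$ (rather than $A^* = A$) is the property we exploit.

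Concretely, I would compute the scalar $v^\top A w$ in two ways. On the one hand, using $Aw = \mu w$, we get $v^\top A w = \mu\, v^\top w$. On the other hand, using $v^\top A = (A^\top v)^\top = (Av)^\top = \lambda v^\top$ (where the middle equality is the hypothesis $A^\top = A$), we get $v^\top A w = \lambda\, v^\top w$. Subtracting the two expressions yields $(\lambda - \mu)\, v^\top w = 0$, and since $\lambda \neq \mu$ the factor $\lambda - \mu$ is invertible in $\CC$, so $v^\top w = 0$.

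There is essentially no obstacle here; the argument is a two-line calculation. The only subtlety worth flagging is that we are using the complex bilinear pairing $v^\top w$ (not the Hermitian inner product $v^* w$), which is exactly the pairing preserved by $\Or_d$, so symmetry of $A$ in the transpose sense is precisely what is needed. This lemma then plugs into the justification of \cref{alg:orthoSVD} in the same way the corresponding Hermitian statement underlies the spectral theorem algorithm.
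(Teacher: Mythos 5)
Your proof is correct and is essentially identical to the paper's: both evaluate $v^\top A w$ two ways using the symmetry $A^\top = A$ to obtain $(\lambda-\mu)v^\top w = 0$. No issues.
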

\begin{proof}
    Since $A$ is symmetric,
    $
    \lambda v^\top w=(Av)^\top w=v^\top A^\top w=v^\top Aw =\mu v^\top w
    $
    which implies $(\lambda -\mu) v^\top w=0$.
\end{proof}
\begin{algorithm}\caption{Orthogonal spectral theorem for general matrices}\label{alg:orthoSVD}
    \noindent \textit{Input}: A symmetric matrix $A\in\CC^{d\times d}$ with distinct eigenvalues.

    \noindent \textit{Output}: Matrices $U\in\SO_d$ and $D\in\CC^{d\times d}$, diagonal with weakly decreasing diagonal entries (in lex order), such that $A=UDU^{\top}$.

    \begin{enumerate}
    \item[1.] Diagonalize $A$, i.e. compute matrices $U\in\GL_d$ and $D\in\CC^{d\times d}$, diagonal with weakly decreasing diagonal entries, such that $A=UDU^{-1}$.
    \item[2.] Applying \cref{lem:eigorth} to the columns of $U$, we see that $U^{\top}U$ is diagonal. Update $U$ by replacing each column vector $u_i$ of $U$ with $u_i/(u_i^{\top} u_i)^{1/2}$, $1\leq i\leq d$. This ensures that $U^{\top}U=I$.
    \item[3.] If $\det(U)=-1$, update $U\gets U\cdot\text{diag}(-1,1,1,\dots,1)$ so that $U\in\SO_d$. 
\end{enumerate}
\end{algorithm}

\subsection{HOSVD}\label{sec:HOSVD}
For $1\leq i\leq n$ let $\U_{d_i}\to \GL(V_i)$ be the $d_i$-dimensional representation of the unitary group. Let $\cS_i$ be the space of $d_i\times d_i$ Hermitian matrices considered as a $\U_{d_i}$-module under the conjugation action, and define maps
\begin{gather*}
    \pi_i\colon V_1\otimes\dots\otimes V_n\to \mathcal{S}_i\quad\text{where}\quad    \pi_i(\Phi)=\Phi_{(i)}\Phi_{(i)}^*
\end{gather*}
for $\Phi\in V_1\otimes\dots\otimes V_n$. By \cref{ex:trout}, $\pi_i$ is a reduction map.
     
    By the spectral theorem, every $M\in \mathcal{S}_i$ can be factored as $M=UD U^*$ for some $U\in\U_{d_i}$ and real diagonal $D$. We uniquely specify $D$ by requiring that its diagonal entries be listed in weakly decreasing order. Define a normal form function $F_i\colon \mathcal{S}_i\to \mathcal{S}_i$ by setting $F_i(M)=D$.
    In this case, \cref{lem:core} recovers the HOSVD described in \cite{DeLathauwer_DeMoor_Vandewalle_2000}. The term ``core element'' comes from this source. We state the result below. The second item is equivalent to the ``all-orthogonality'' condition from \cite{DeLathauwer_DeMoor_Vandewalle_2000}.

\begin{theorem}[{HOSVD, \cite{DeLathauwer_DeMoor_Vandewalle_2000}}]\label{thm:HOSVD}
    For $1\leq i\leq n$ let $V_i$ be a $d_i$-dimensional complex vector space. Then for each tensor $\Phi\in V_1\otimes\dots\otimes V_n$ there exists a core tensor $\Omega$ such that
    \begin{itemize}
        \item $\Phi=(U_1\otimes\dots\otimes U_n)\Omega$ for some $U_i\in\U_{d_i}$, and
        \item $D_i=\Omega_{(i)}\Omega_{(i)}^*$ is real diagonal with weakly decreasing diagonal entries for all $1\leq i\leq n$.
    \end{itemize}
The core tensor is unique up to the action of $H_1\times\dots\times H_n$, where $H_i\leq \U_{d_i}$ is the stabilizer subgroup of $D_i$ by the conjugation action. 
\end{theorem}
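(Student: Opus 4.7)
The plan is to apply the workhorse lemma (\cref{lem:core}) directly to the setup established in the paragraphs preceding the theorem statement. First I would recall that the reduction maps $\pi_i\colon V_1\otimes\cdots\otimes V_n\to \cS_i$ sending $\Phi\mapsto \Phi_{(i)}\Phi_{(i)}^*$ are valid reduction maps for the $\U_{d_1}\times\cdots\times\U_{d_n}$-action, as verified in \cref{ex:trout} (the calculation there is $n=2$, but it extends verbatim to general $n$ using the flattening identity in Eq.~\eqref{flatact}, because only the $i$th factor multiplies $\Phi_{(i)}$ on the left while the remaining factors cancel in $\widehat{U_i}^\top \overline{\widehat{U_i}}=I$).

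Next I would define the normal form function $F_i\colon\cS_i\to\cS_i$ as follows: given a Hermitian matrix $M$, use the spectral theorem to write $M=UDU^*$ with $D$ real diagonal, then set $F_i(M)=D$ and uniquely specify $D$ by requiring its diagonal entries to be listed in weakly decreasing order. This is a well-defined normal form function because the multiset of eigenvalues is a complete invariant of the $\U_{d_i}$-conjugation orbit on Hermitian matrices (two Hermitian matrices are unitarily conjugate iff they have the same eigenvalues with multiplicity), and a weakly decreasing listing picks a unique diagonal representative from each orbit.

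With these ingredients, \cref{lem:core} immediately yields the existence of a core tensor $\Omega$ satisfying $\Phi=(U_1\otimes\cdots\otimes U_n)\Omega$ for some $U_i\in\U_{d_i}$, together with $\pi_i(\Omega)=F_i(\pi_i(\Omega))$ for each $i$. The second condition says precisely that $\Omega_{(i)}\Omega_{(i)}^*$ equals a real diagonal matrix with weakly decreasing entries, which is the content of the second bullet. The uniqueness clause of \cref{lem:core} then gives that $\Omega$ is determined up to the action of $H_1\times\cdots\times H_n$ where $H_i$ is the stabilizer of $D_i=\pi_i(\Omega)$ under the $\U_{d_i}$-conjugation action, matching the final claim.

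The only real obstacle is bookkeeping: one should note that if eigenvalues of $D_i$ have multiplicity, then $H_i$ is a nontrivial block-diagonal unitary subgroup (a product of smaller unitary groups corresponding to the eigenspaces), so the core tensor genuinely has a residual symmetry, which is why the theorem asserts uniqueness only up to $H_1\times\cdots\times H_n$ rather than strict uniqueness. No further calculation is needed beyond verifying these identifications, since \cref{lem:core} does all the heavy lifting.
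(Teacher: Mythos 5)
Your proposal is correct and follows essentially the same route as the paper: the paper's proof is literally ``Apply \cref{lem:core}'' after setting up the reduction maps $\pi_i(\Phi)=\Phi_{(i)}\Phi_{(i)}^*$ (from \cref{ex:trout}) and the spectral-theorem normal form function $F_i$ with weakly decreasing diagonal entries. Your additional remarks on the well-definedness of $F_i$ and the block-diagonal residual stabilizer when eigenvalues repeat are consistent with the paper's surrounding discussion (cf.\ \cref{diag}).
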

\begin{proof}
    Apply \cref{lem:core}.
\end{proof}
\subsection{Orthogonal HOSVD}\label{sec:complexOrthoTensor}
For $1\leq i\leq n$ let $\SO_{d_i}\to\GL(V_i)$ be the $d_i$-dimensional representation of $\SO_{d_i}$. Let $\mathcal{S}_i\cong S^2 V_i$ be the space of $d_i\times d_i$ complex symmetric matrices considered as a $\SO_{d_i}$-module by the conjugation action, and define
\[
\pi_i\colon V_1\otimes\dots\otimes V_n\to\mathcal{S}_i,\quad\text{where}\quad\pi_i(\Phi)=\Phi_{(i)}\Phi_{(i)}^\top 
\]
for $\Phi\in V_1\otimes \dots\otimes V_n$. Then $\pi_i$ is a reduction map.

By \cref{prop:od}, every $M\in\mathcal{S}_i$ can be factored as $M=UDU^\top $ where $U\in \SO_{d_i}$ and $D=J_{k_1}(\lambda_1)\oplus\dots\oplus J_{k_r}(\lambda_r)$ is a direct sum of symmetrized Jordan blocks. We can uniquely specify $D$ by ordering the blocks in weakly decreasing order, where order on the set of blocks $J_{k}(\lambda)$ is induced by lexicographical order on the triples $(\text{Re}(\lambda),\text{Im}(\lambda),k)$. Define a normal form function $F_i\colon \mathcal{S}_i\to\mathcal{S}_i$ by setting $F_i(M)=D$.

With this setup, \cref{lem:core} gives us the following tensor factorization. We call it the \defi{orthogonal HOSVD} since it may be viewed as the complex orthogonal version of the HOSVD.

\begin{theorem}[Orthogonal HOSVD]\label{thm:OHOSVD}
    For $1\leq i\leq n$ let $V_i$ be a $d_i$-dimensional complex vector space. Then for each tensor $\Phi\in V_1\otimes\dots\otimes V_n$ there exists a core tensor $\Omega$ such that
    \begin{itemize}
        \item $\Phi=(U_1\otimes\dots\otimes U_n)\Omega$ for some $U_i\in \SO_{d_i}$, and
        \item $D_i=\Omega_{(i)}\Omega_{(i)}^\top $ is a direct sum of symmetrized Jordan blocks in weakly decreasing order for all $1\leq i\leq n$. 
    \end{itemize}
The core tensor is unique up to the action of $H_1\times\dots\times H_n$, where $H_i\leq \SO_{d_i}$ is the stabilizer subgroup of $D_i$ by the conjugation action.
\end{theorem}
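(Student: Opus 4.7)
The plan is to apply \cref{lem:core} directly; the theorem is essentially a packaging result, so the work lies in verifying the two hypotheses for the specified reduction maps $\pi_i$ and normal form functions $F_i$.

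First I would check that each $\pi_i(\Phi) = \Phi_{(i)}\Phi_{(i)}^\top$ is a reduction map with respect to the action of $\SO_{d_1}\times\cdots\times\SO_{d_n}$. The computation in \cref{ex:oo} handled the case $n=2$ for $\Or_{d_1}\times\Or_{d_2}$, and it generalizes verbatim: using Eq.~\eqref{flatact}, if $(U_1,\dots,U_n)\in\SO_{d_1}\times\cdots\times\SO_{d_n}$ then
\[
\pi_i((U_1,\dots,U_n).\Phi) = U_i\Phi_{(i)}\widehat{U_i}^\top(\widehat{U_i}^\top)^\top\Phi_{(i)}^\top U_i^\top = U_i\Phi_{(i)}\Phi_{(i)}^\top U_i^\top = U_i.\pi_i(\Phi),
\]
where the middle equality uses that $\widehat{U_i}=\bigotimes_{j\neq i}U_j$ satisfies $\widehat{U_i}\widehat{U_i}^\top = I$ because each $U_j$ is complex orthogonal. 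Thus $\pi_i$ is a reduction map to the $\SO_{d_i}$-module $\cS_i$.

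Next I would verify that $F_i$ is a well-defined normal form function for the conjugation action of $\SO_{d_i}$ on $\cS_i$. The remark following \cref{prop:od} says that the proposition holds with $\Or_d$ replaced by $\SO_d$, so every $M\in\cS_i$ can be written as $UDU^\top$ with $U\in\SO_{d_i}$ and $D$ a direct sum of symmetrized Jordan blocks $S_{k_\ell}(\lambda_\ell)$. Since $S_k(\lambda)$ is similar (in the ordinary sense) to the standard Jordan block at $\lambda$, the multiset of pairs $(\lambda_\ell,k_\ell)$ is an invariant of $M$ up to similarity, hence an invariant under the conjugation action. The lex order on triples $(\Re(\lambda_\ell),\Im(\lambda_\ell),k_\ell)$ then picks out a unique direct-sum ordering, so setting $F_i(M) = D$ produces a function satisfying the two axioms in \cref{def:reduction}'s surrounding definition of normal form function.

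With both ingredients in place, \cref{lem:core} applied to the tuple $(\pi_1,\dots,\pi_n)$ and $(F_1,\dots,F_n)$ produces, for each $\Phi$, a core element $\Omega$ with $\Phi = (U_1\otimes\cdots\otimes U_n)\Omega$ for some $U_i\in\SO_{d_i}$ and $\pi_i(\Omega)=F_i(\pi_i(\Omega))$; unwinding the latter gives exactly the statement that $\Omega_{(i)}\Omega_{(i)}^\top$ is a direct sum of symmetrized Jordan blocks in weakly decreasing lex order. The uniqueness clause of \cref{lem:core} likewise transports: $\Omega$ is unique up to $H_1\times\cdots\times H_n$ with $H_i$ the stabilizer of $D_i$ in $\SO_{d_i}$. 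The only step that is not purely mechanical is the verification that Jordan-type data is a complete similarity invariant in the symmetrized setting, which I expect to be the main obstacle; but this follows from the classical Jordan theory applied to the similarity between $S_k(\lambda)$ and the standard block, so no new input beyond \cref{prop:od} is needed.
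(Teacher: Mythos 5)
Your proposal is correct and follows essentially the same route as the paper: the paper's proof is literally ``Apply \cref{lem:core},'' with the preceding paragraphs of \cref{sec:complexOrthoTensor} supplying exactly the two ingredients you verify — that $\pi_i(\Phi)=\Phi_{(i)}\Phi_{(i)}^\top$ is a reduction map (the $n$-factor generalization of \cref{ex:oo}) and that $F_i$ is a well-defined normal form function via \cref{prop:od} and the lex ordering of blocks. Your extra remark that the multiset of symmetrized Jordan blocks is a similarity invariant (hence $F_i$ is constant on orbits) is a detail the paper leaves implicit, and it is correct.
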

\begin{proof}
    Apply \cref{lem:core}.
\end{proof}

\begin{proposition}\label{diag}
    Let $G$ be a subgroup of $\GL_d$ and let $D=\lambda_1 I_{k_1}\oplus\dots\oplus \lambda_r I_{k_r}$ where the $\lambda_i\in\CC$ are pairwise distinct and $k_1+\dots+k_r=d$.  Then
    \[
    H:=\{U\in G\colon  UDU^{-1}=D\}=G\cap (\GL_{k_1}\oplus\dots\oplus\GL_{k_r}).
    \]
    In particular, $H=\U_{k_1}\oplus\dots\oplus\U_{k_r}$ if $G=\U_d$ and $H=\SO_{k_1}\oplus\dots\oplus\SO_{k_r}$ if $G=\SO_d$.
\end{proposition}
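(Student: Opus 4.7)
The plan is to reduce everything to a single structural observation: a matrix commutes with $D$ precisely when it preserves the eigenspace decomposition of $D$, i.e., when it is block-diagonal with block sizes $k_1,\ldots,k_r$. Once this is in hand, the identification of $H$ for $G=\U_d$ or $G=\SO_d$ is a direct computation on block-diagonal matrices.

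First I would prove the general identity $\{U \in \GL_d : UDU^{-1}=D\} = \GL_{k_1}\oplus\dots\oplus \GL_{k_r}$, after which intersecting with $G$ gives the main claim immediately. Partition any $U\in\GL_d$ into blocks $U_{ij}$ of size $k_i\times k_j$ compatible with the partition of $D$. A short calculation shows $(UD)_{ij}=\lambda_j U_{ij}$ and $(DU)_{ij}=\lambda_i U_{ij}$, so $UD=DU$ is equivalent to $(\lambda_i-\lambda_j)U_{ij}=0$ for all $i,j$. Since the $\lambda_i$ are pairwise distinct, this forces $U_{ij}=0$ whenever $i\neq j$, so $U$ is block-diagonal. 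Conversely, any $U\in\GL_{k_1}\oplus\dots\oplus\GL_{k_r}$ commutes with $D$ because $D$ acts as the scalar $\lambda_i$ on the $i$-th block, which commutes with everything.

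Next I would specialize. Writing a block-diagonal $U = U_1\oplus\dots\oplus U_r$ with $U_i\in\GL_{k_i}$, we have $U^*U = (U_1^*U_1)\oplus\dots\oplus(U_r^*U_r)$, so $U\in\U_d$ iff each $U_i\in\U_{k_i}$; this gives $H = \U_{k_1}\oplus\dots\oplus\U_{k_r}$. The orthogonal case is analogous: $U^\top U = (U_1^\top U_1)\oplus\dots\oplus(U_r^\top U_r)$, so $U$ is orthogonal iff each $U_i$ is orthogonal; combined with $\det(U)=\prod_i \det(U_i)$ we identify $H$ with the block-diagonal orthogonal matrices of total determinant $1$, which is how the authors denote $\SO_{k_1}\oplus\dots\oplus\SO_{k_r}$ here.

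There is no real obstacle; the main thing to be careful about is the block-matrix bookkeeping in Step 1 and the observation that distinct eigenvalues of $D$ are precisely what rules out the off-diagonal blocks. Everything else is formal verification on the block-diagonal subgroup.
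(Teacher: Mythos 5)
Your proof is correct and takes essentially the same route as the paper: the heart of both arguments is that commuting with $D$ forces block-diagonal form (the paper phrases this as ``$U$ preserves the eigenspaces of $D$,'' you make it explicit via $(\lambda_i-\lambda_j)U_{ij}=0$), after which the main claim is just intersection with $G$. You were also right to pause on the $\SO_d$ case: what the intersection actually gives is $\{U_1\oplus\dots\oplus U_r : U_i\in\Or_{k_i},\ \prod_i\det U_i=1\}$, which for $r\geq 2$ strictly contains the image of $\SO_{k_1}\times\dots\times\SO_{k_r}$ under direct sum (e.g.\ two blocks each of determinant $-1$), so under the paper's own definition of the $\oplus$ notation its ``in particular'' clause for $\SO_d$ is stated loosely, and your charitable reading is the correct fix.
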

\begin{proof}
    The containment $\supset$ is clear. For the other direction, suppose $U\in G$ such that $UDU^{-1}=D$. Then $U$ commutes with $D$, which implies that $U$ preserves the eigenspaces of $D$. Thus $U$ has block diagonal form $A_1\oplus\dots\oplus A_r$ with $A_i\in\GL_{k_i}$.
\end{proof}

\cref{diag} helps us understand what the stabilizers $H_i$ of \cref{thm:HOSVD,thm:OHOSVD} are. In the unitary case (\cref{thm:HOSVD}) the reduced density matrices $D_i$ are diagonal, so we have a complete understanding of the possible stabilizers. In the orthogonal case (\cref{thm:OHOSVD}), the matrices $D_i$ are generically diagonal. We say more about this situation in \cref{sec:jordan}, but will not give a complete description of the possible stabilizers.

Let $\Phi\in V_1\otimes\dots\otimes V_n$ as in the statement of \cref{thm:OHOSVD}. Suppose that for each $1\leq i\leq n$ the matrix $\Phi_{(i)}\Phi_{(i)}^\top$ has distinct eigenvalues. Then the orthogonal HOSVD core tensor $\Omega$ has the property that $\Omega_{(i)}\Omega_{(i)}^\top$ is diagonal with distinct diagonal entries. In this case, \cref{alg:OHOSVD} for computing $\Omega$ mirrors the classical HOSVD algorithm \cite{DeLathauwer_DeMoor_Vandewalle_2000}; both follow from the proof of \cref{lem:core}. We leave the situation with repeated eigenvalues open. 

\begin{algorithm}
\caption{Orthogonal HOSVD for general tensors}
\label{alg:OHOSVD}
    \:
    \newline
    \noindent \textit{Input}: A tensor $\Phi\in \CC^{d_1}\otimes\dots\otimes \CC^{d_n}$ such that for each $1\leq i\leq n$, $\Phi_{(i)}\Phi_{(i)}^\top$ has distinct eigenvalues.

\noindent \textit{Output}: Core tensor $\Omega$ in the sense of \cref{thm:OHOSVD}.
\begin{enumerate}
    \item[1.] For $1\leq i\leq n$ use \cref{alg:orthoSVD} to factorize $\Phi_{(i)}\Phi_{(i)}^\top=U_iD_i U_i^\top$, where $U_i\in\SO_{d_i}$ and $D_i$ is diagonal with decreasing diagonal entries.
    \item[2.] Set $\Omega \gets (U_1^\top\otimes\dots\otimes U^\top_n)\Phi$. 
\end{enumerate}
\end{algorithm}

\subsection{Stabilizers of Jordan blocks}\label{sec:jordan}

In this section, let $J_k(\lambda)$ denote the Jordan block with diagonal entries equal to $\lambda$. Recall also our notation for the symmetrized Jordan block $S_k(\lambda)$ introduced in \cref{prop:od}.

Let $G\leq \GL_n$. The stabilizer of a matrix $M\in \CC^{n\times n}$ in $G$ is
\[
\Stab_{G}(M) = \{A\in G \: \mid \: AMA^{-1} = M \}.
\]
\cref{diag} described $\Stab_{\SO_n}(M)$ when the Jordan blocks of $M$ are all $1\times 1$. In general, any $A\in \Stab_{G}(M)$ must preserve the flags of generalized eigenspaces of $M$.
We can be more specific in the case that the Jordan form of $M$ consists of a single Jordan block. We note that in general, the description of the stabilizer should be more complicated.

\begin{lemma}\label{lem:up}
Consider the Jordan block $M = J_k(\lambda)$. We have
\[\Stab_{\GL_k}(M) = 
\left\{
\begin{pmatrix}
             a_1 & a_2 & \cdots & a_k \\
             & \ddots & \ddots & \vdots\\
             & & \ddots & a_2 \\
             &  &  & a_1
        \end{pmatrix}\; \middle| \; a_1 \neq 0
\right\}.
\]
\end{lemma}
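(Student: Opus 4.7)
The plan is to first reduce to the nilpotent case. Writing $M = \lambda I_k + N$ where $N$ is the shift matrix with ones on the superdiagonal and zeros elsewhere, we have $AMA^{-1} = M$ if and only if $AN = NA$, so it suffices to describe the centralizer of $N$ in $\GL_k$ and then check invertibility.

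Next, I would compute the commutation condition entrywise. Since $N e_j = e_{j-1}$ for $j\geq 2$ and $N e_1 = 0$, a direct calculation gives $(AN)_{ij} = a_{i,j-1}$ with the convention $a_{i,0} = 0$, and $(NA)_{ij} = a_{i+1,j}$ with the convention $a_{k+1,j} = 0$. The equality $AN = NA$ therefore unpacks into three families of equations:
\begin{itemize}
    \item taking $j = 1$ with $i \leq k-1$ forces $a_{i+1,1} = 0$, so $a_{21} = \cdots = a_{k1} = 0$;
    \item taking $i = k$ with $j \geq 2$ forces $a_{k,j-1} = 0$, so $a_{k,1} = \cdots = a_{k,k-1} = 0$;
    \item in the interior range $1 \leq i \leq k-1$ and $2 \leq j \leq k$, the condition $a_{i,j-1} = a_{i+1,j}$ says that the entries of $A$ are constant along each diagonal $\{(i,j) : j - i = c\}$.
\end{itemize}
Propagating the two boundary conditions along the diagonals via the interior equations then forces $a_{ij} = 0$ whenever $i > j$, so $A$ is upper triangular, and the interior equations force $A$ to be Toeplitz. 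Thus $A$ has exactly the block form displayed in the statement, for some scalars $a_1, \dots, a_k \in \CC$.

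Finally, invertibility is immediate from $\det A = a_1^k$, so membership in $\GL_k$ is equivalent to $a_1 \neq 0$. There is no real obstacle beyond keeping track of the boundary cases in the entrywise calculation; an alternative phrasing would observe that any matrix commuting with $N$ is a polynomial in $N$ (since $N$ is a single Jordan block and hence non-derogatory), and polynomials in the shift matrix are precisely the upper-triangular Toeplitz matrices, but the direct computation above is self-contained and no harder.
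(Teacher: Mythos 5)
Your proof is correct and follows essentially the same route as the paper: reduce to the commutation $AJ_k(0)=J_k(0)A$ with the nilpotent shift and solve the resulting entrywise equations. The only (minor) difference is that you extract upper-triangularity directly from the boundary cases of those equations, whereas the paper first argues it separately from preservation of the flag $\ker(M-\lambda I)^j$; your version is self-contained and also makes the invertibility condition $\det A=a_1^k\neq 0$ explicit.
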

\begin{proof}
    Suppose $A$ commutes with $J_k(\lambda)$. Then $A$ preserves the generalized eigenspaces of $J_k(\lambda)$, hence $A$ is upper triangular. Since
    \[
        A(\lambda I_k + J_k(0))=AJ_k(\lambda)=J_k(\lambda) A = (\lambda I_k + J_k(0))A,
    \]
    it follows that $AJ_k(0) = J_k(0)A$. We have
    \begin{align}\label{eq:cal}
    \begin{split}
        J_k(0) A e_i &= A_{2,i}e_1 + A_{3,i} e_2 + \dots + A_{i,i} e_{i-1},\quad \text{and} \\
        A J_k(0) e_i &= A_{1,i-1}e_1 + A_{2,i-1} e_2 + \dots + A_{i-1,i-1}e_{i-1}
    \end{split}
    \end{align}
    for all $2\leq i\leq k$. Thus, $A$ has the desired form.

    Conversely, suppose $A$ has the form above. As in the preceding paragraph, it suffices to show that $A$ commutes with $J_k(0)$. This follows from \eqref{eq:cal}.
\end{proof}
In \cref{lem:up}, we showed that $A$ lies in a $k$-dimensional subspace of $\CC^{k\times k}$. The obvious basis for this space can be written as
$\{I_k, J_k(0), J_k(0)^2,\dots, J_k(0)^{k-1}\}.$
By \cref{prop:od}, the symmetrized Jordan block $S_k(\lambda)$ is similar to $J_k(\lambda)$. Since similarity transformations are linear and preserve matrix multiplication, we immediately obtain the following.
\begin{proposition}\label{prop:symjb}
    Consider the symmetrized Jordan block $M=S_k(\lambda)$. We have
    \[
      \Stab_{\GL_k}(M) = \textnormal{span}\{I_k, S_k(0), S_k(0)^2,\dots, S_k(0)^{k-1}\}\cap\GL_n.
    \]
\end{proposition}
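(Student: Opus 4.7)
The plan is to reduce the proposition to Lemma~\ref{lem:up} via an explicit similarity transformation.

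First I would use Proposition~\ref{prop:od} applied with $\lambda=0$ (or the elementary fact that any two nilpotent $k\times k$ matrices of index $k$ are similar) to fix a single matrix $P\in\GL_k$ satisfying $PS_k(0)P^{-1}=J_k(0)$. Since $S_k(\lambda)=\lambda I_k+S_k(0)$ and $J_k(\lambda)=\lambda I_k+J_k(0)$, this same $P$ then intertwines $S_k(\lambda)$ with $J_k(\lambda)$ for every $\lambda\in\CC$; this uniformity in $\lambda$ is the only subtle point in the setup.

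Next I would observe that conjugation by $P$ is a group isomorphism
\[
\Stab_{\GL_k}(S_k(\lambda))\longrightarrow \Stab_{\GL_k}(J_k(\lambda)),\qquad A\mapsto PAP^{-1}.
\]
By Lemma~\ref{lem:up}, the target is precisely $\textnormal{span}\{I_k,J_k(0),J_k(0)^2,\ldots,J_k(0)^{k-1}\}\cap\GL_k$, since the upper-triangular matrices with constant entries on each diagonal are exactly the polynomials in the shift $J_k(0)$.

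Finally I would pull this description back through $P^{-1}$. Linearity of conjugation, the identity $P^{-1}J_k(0)^jP=S_k(0)^j$ for all $j\geq 0$, and the preservation of invertibility under conjugation together give
\[
\Stab_{\GL_k}(S_k(\lambda))=\textnormal{span}\{I_k,S_k(0),S_k(0)^2,\ldots,S_k(0)^{k-1}\}\cap\GL_k,
\]
which is the claim (with the evident reading $\GL_n = \GL_k$ in the proposition's statement). There is no significant obstacle; the combinatorics is already done in Lemma~\ref{lem:up}, and the proposition is simply its transport along the fixed similarity $P$.
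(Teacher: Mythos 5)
Your proof is correct and takes essentially the same route as the paper: the paper also deduces the proposition from \cref{lem:up} by transporting the description of $\Stab_{\GL_k}(J_k(\lambda))$ along the similarity between $S_k(\lambda)$ and $J_k(\lambda)$ furnished by \cref{prop:od}, using that conjugation is linear and preserves products. Your explicit observation that the conjugating matrix can be chosen uniformly in $\lambda$ (because $S_k(\lambda)=\lambda I_k+S_k(0)$ and $J_k(\lambda)=\lambda I_k+J_k(0)$) makes precise a point the paper leaves implicit, and your reading of $\GL_n$ as $\GL_k$ in the statement is the intended one.
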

\begin{proof}
    Apply \cref{lem:up}.
\end{proof}
\begin{example}
    Let $A\in\GL_4$ and $S=S_4(0)$. We have
    \[
        S = \begin{pmatrix}
            0 & 1 & \im & 0 \\
            1 & \im & 1 & -\im \\
            \im & 1 & -\im & 1 \\
            0 & -\im & 1 & 0
        \end{pmatrix},\quad
        S^2 = \begin{pmatrix}
            0 & 2\im & 2 & 0 \\
            2\im & 0 & 0 & 2 \\
            2 & 0 & 0 & -2\im \\
            0 & 2 & -2\im & 0
        \end{pmatrix},\quad
        S^3 = \begin{pmatrix}
            4\im & 0 & 0 & 4 \\
            0 & 0 & 0 & 0 \\
            0 & 0 & 0 & 0 \\
            4 & 0 & 0 & -4\im
        \end{pmatrix}.
    \]
    In this case, \cref{prop:symjb} says that $A$ stabilizes $S_4(\lambda)$ by the conjugation action if and only if $A$ has the form below, where $a,b,c,d\in\CC$.
    \[
            A = \begin{pmatrix}
            a+ d\im & b+c\im & c+b\im & d \\
            b+c\im & a+b\im & b & c-b\im \\
            c+b\im & b & a-b\im & b-c\im \\
            d & c-b\im & b-c\im & a-d\im
        \end{pmatrix}
    \]
\end{example}

\begin{proposition}
    Let $M=S_k(\lambda)$. Then $\Stab_{\Or_k}(M)=\{\pm I\}$.
\end{proposition}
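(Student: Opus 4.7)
The plan is to leverage \cref{prop:symjb}, which pins down any element of the stabilizer in $\GL_k$ as a polynomial in $S := S_k(0)$. Writing
\[
A = p(S) = c_0 I + c_1 S + c_2 S^2 + \dots + c_{k-1} S^{k-1}
\]
for some $c_0, \dots, c_{k-1} \in \CC$, I would first observe that $S_k(\lambda)$ is (complex) symmetric by construction in \cref{prop:od} — its real part is symmetric tridiagonal, and its imaginary part is symmetric along the anti-diagonal — so $S$ and all its powers are symmetric. Hence $A^\top = A$, and the orthogonality condition $A^\top A = I$ collapses to the single equation $A^2 = I$.

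Next I would invoke the second sentence of \cref{prop:od}: $S$ is similar to $J_k(0)$, so $S$ is nilpotent with minimal polynomial $x^k$. Consequently, $p(S)^2 = I$ is equivalent to the polynomial congruence
\[
p(x)^2 \equiv 1 \pmod{x^k}.
\]

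The heart of the argument is then a straightforward induction on the coefficients $c_j$. Matching constant terms gives $c_0^2 = 1$, so $c_0 = \pm 1$. For $1 \leq j \leq k-1$, assuming inductively that $c_1 = \dots = c_{j-1} = 0$, the coefficient of $x^j$ in $p(x)^2$ is
\[
2 c_0 c_j + \sum_{i=1}^{j-1} c_i\, c_{j-i} = 2 c_0 c_j,
\]
which must vanish, forcing $c_j = 0$. Therefore $p(x) = \pm 1$ and $A = \pm I$. Both $\pm I$ clearly lie in $\Or_k$ and commute with $M$, so $\Stab_{\Or_k}(M) = \{\pm I\}$. I do not anticipate a real obstacle here; the mildly delicate checks are the symmetry of $S_k(0)$ (transparent from the explicit description in \cref{prop:od}) and keeping track that the inductive step uses only indices strictly less than $k$, where the congruence modulo $x^k$ still constrains the coefficient.
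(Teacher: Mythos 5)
Your proof is correct and follows essentially the same route as the paper: both reduce to \cref{prop:symjb}, use symmetry plus orthogonality to get $A^2=I$, and conclude $A=\pm I$. Your explicit induction on the coefficients modulo $x^k$ just fills in the detail the paper leaves as ``it follows that $\alpha_i=0$ when $i>0$.''
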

\begin{proof}
    One direction is obvious. For the other, suppose $U\in \Or_k$ commutes with $M$. By \cref{prop:symjb}, we have
    \[
    U = \alpha_0 I_k + \alpha_1 S_k(0) + \alpha_2 S_k(0)^2 + \dots + \alpha_{k-1} S_k(0)^{k-1}
    \]
    for some constants $\alpha_i\in\CC$. Then since $U$ is symmetric and orthogonal, $U^2=UU^\top=I_k$. It follows that $\alpha_i=0$ when $i>0$, and $\alpha_0=\pm 1$.
\end{proof}

\section{Normal forms for almost all qubit states}\label{sec:normal}
In this section we present several algorithms for finding normal forms with respect to various group actions on $\cH_n$. For the LU case we give a detailed exposition of ideas in \cite{BKraus} that lead to Kraus's \cref{alg:kraus}. Our more general perspective allows us to go further and obtain \cref{alg:slocc-even,alg:slocc-odd} for the SLOCC cases.
\subsection{General points and generic properties}\label{sec:genericProperties}
Let $V$ be a finite-dimensional complex
vector space, endowed with the usual Euclidean topology. A \defi{general point} of $V$ is one contained in an open, full-measure (hence also dense) subset of $V$ . A property that holds for general points is a \defi{generic property}.
\begin{lemma}\label{lem:ae}
    For each $i=1,\dots,r$ let $p_i:W_i\to\CC$ be a nonzero holomorphic function on an open, full-measure subset $W_i\subset V$. Then $D=\{v\in V:p_i(v)\neq 0,\forall i\}$ is an open, full-measure subset of $V$.
\end{lemma}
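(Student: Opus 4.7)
The plan is to separately verify that $D$ is open and that its complement has measure zero. Openness is immediate: each $p_i$ is continuous on $W_i$, which is open in $V$, so
\[
\{v\in W_i : p_i(v)\neq 0\} = W_i \cap p_i^{-1}(\CC\setminus\{0\})
\]
is open in $V$. Intersecting these over the finitely many indices $i$ produces $D$, still open.

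For the measure statement, I would write
\[
V\setminus D \;=\; \bigcup_{i=1}^r \Bigl((V\setminus W_i)\;\cup\;Z_i\Bigr),
\]
where $Z_i := \{v\in W_i : p_i(v)=0\}$. Each $V\setminus W_i$ is null by hypothesis. Each $Z_i$ is the vanishing locus of a not-identically-zero holomorphic function on an open subset of $V\cong\CC^{\dim V}$, hence is a proper complex analytic subvariety of $W_i$. Such a subvariety has real codimension at least two and therefore Lebesgue measure zero. A finite union of null sets is null, so $V\setminus D$ has measure zero, which together with openness finishes the proof.

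The only nontrivial input is the classical fact that the zero locus of a nonzero holomorphic function on a connected open subset of $\CC^N$ is Lebesgue-null. I would simply cite this standard result; if one wanted a self-contained argument, it follows by induction on $N$: the case $N=1$ is the isolated-zeros theorem, and the inductive step uses Fubini together with the observation that for almost every slice either $p_i$ restricts to a nonzero holomorphic function in one fewer variable or it vanishes identically on that slice, the latter slices forming a set of measure zero by the identity theorem. A minor bookkeeping point: if some $W_i$ is disconnected, one applies the fact on each connected component, using that ``nonzero'' should be interpreted as ``not identically zero on any component'' (which is the sense relevant to the applications in later sections).

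The main (small) obstacle is thus nothing more than invoking the correct complex-analytic lemma; everything else is set-theoretic manipulation and the finite-union property of null sets.
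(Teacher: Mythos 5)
Your proof is correct and follows essentially the same route as the paper: the key input in both is that the zero set of a not-identically-zero holomorphic function is Lebesgue-null, after which one intersects finitely many open full-measure sets. Your added remark about disconnected $W_i$ (interpreting ``nonzero'' as ``not identically zero on any component'') is a legitimate refinement that the paper's one-line proof glosses over, but it does not change the argument.
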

\begin{proof}
    For each $i$ the set of zeros $Z_i=p_i^{-1}(0)$ of $p_i$ is a null set (see \cite[p.~9]{gunning2022analytic}). Then $W_i\setminus Z_i$ is open and of full measure. Then the intersection $D=\bigcap_{i=1}^r W_i\setminus Z_i$ is open and of full measure.
\end{proof}
According to \cref{lem:ae}, to show that a property $P$ holds for general $\Phi\in V$, it suffices to exhibit a collection $\{p_i:i=1,\dots,n\}$ of complex-valued functions, each holomorphic on an open, dense, full-measure subset of $V$, such that $P$ holds for every $\Phi\in V$ such that $p_i(\Phi)\neq 0$ for all $i$. If the functions $p_i$ are polynomials in $\CC[V]$, then $P$ is generic in the stronger sense of holding on a Zariski open subset of $V$. This is the case when $P$ is the property that a matrix $M\in \CC^{d\times d}$ does not have repeated eigenvalues since $P$ holds if and only if the discriminant of the characteristic polynomial of $M$ does not vanish.

\subsection{LU group}\label{sec:tpt}
The HOSVD of tensors gets us halfway to finding normal forms for almost all tensors under the LU action. Suppose that $\Phi\in\cH_n$ is general: specifically, assume that $\Phi_{(i)}\Phi_{(i)}^*$ has distinct eigenvalues for all $1\leq i\leq n$. Then the HOSVD core tensor $\Omega$ associated to $\Phi$ has the property that $\Omega_{(i)}\Omega_{(i)}^*$ is real diagonal with distinct diagonal entries.
By \cref{diag}, the core tensor $\Omega$ is unique up to the action of $H^{\times n}$, where 
\[
H=\{\text{diag}(e^{\im s},e^{\im t}):s,t\in\RR\}.
\]
So the problem is reduced to finding an easily computable normal form $\Omega'$ in the $H^{\times n}$-orbit of $\Omega$ since such $\Omega'$ also serves as a normal form for the $\U_2^{\times n}$-orbit of $\Phi$. First we present \cref{alg:Tnnormal}, which is a simple approach for achieving this, but which assumes that $\Omega$ is not too sparse. Next we present Kraus's solution \cite{BKraus}.

By pulling out scalars, the action of $H^{\times n}$ is the same as the action of $\
\mathbb{T}^{\times n}$ plus multiplication by a global scalar $e^{\im s}$, where $s\in\RR$ and
\[
\mathbb{T}=\{\text{diag}(1,e^{\im t}):t\in\RR\}.
\]
The action of $\mathbb{T}^{\times n}$ can be understood through the following formula for basis vectors, which extends to $\cH_n$ by linearity:
\begin{equation}\label{hbar}
\begin{pmatrix}
    1 & \\
    & e^{\im t_1}\\
    \end{pmatrix}\otimes
    \dots
    \otimes
    \begin{pmatrix}
    1 & \\
    & e^{\im t_n}\\
    \end{pmatrix}
    \ket{v_1\dots v_n}=\exp({\im (t_1v_1+\dots+t_nv_n)})\ket{v_1\dots v_n}
\end{equation}
for each $(v_1,\dots,v_n)\in\{0,1\}^n$.

\subsubsection{Simple normal form under unitary stabilizers}\label{sec:simple}
    For $1\leq i\leq n$ let $\textbf{v}^i\in\{0,1\}^n$ be the ``basis vector'' $\textbf{v}^i=(\delta_{1i},\dots,\delta_{ni})$ with 1 in the $i$th entry and 0's elsewhere, then let $\mathcal{B}=\{\textbf{v}^1,\dots,\textbf{v}^n\}$. \cref{alg:Tnnormal} provides a simple way to compute the $H^{\times n}$-normal form $\Omega'$ for $\Omega$ assuming that $\Omega_{\textbf{v}}\neq 0$ whenever $\textbf{v}\in\mathcal{B}\cup\{\textbf{0}\}$.
    
\begin{algorithm}\caption{Simple normal form under unitary stabilizers}\label{alg:Tnnormal}
\noindent \textit{Input}: $\Omega\in\cH_n$ such that $\Omega_{\textbf{v}}\neq 0$ whenever $\textbf{v}\in\mathcal{B}\cup\{\textbf{0}\}$.

\noindent \textit{Output}: The unique $\Omega'$ in the $H^{\times n}$-orbit of $\Omega$ such that each entry $\Omega'_{\textbf{v}}$ is real and positive whenever $\textbf{v}\in\mathcal{B}\cup\{0\}$.

\begin{enumerate}
    \item[1.] Update $\Omega\gets e^{\im t}\Omega$, where $t\in\RR$ such that $e^{\im t}\Omega_\textbf{0}$ is real and positive.
    \item[2.] For $1\leq i\leq n$ choose $t_i\in\RR$ so that $e^{\im  t_i}\Omega_{\textbf{v}^i}$ is real and positive.
    \item[3.] Compute $
    \Omega'=
    \begin{pmatrix}
    1 & \\
    & e^{\im t_1}\\
    \end{pmatrix}\otimes
    \dots
    \otimes
    \begin{pmatrix}
    1 & \\
    & e^{\im t_n}\\
    \end{pmatrix}
    \Omega.
    $ 
\end{enumerate}
\end{algorithm}

\subsubsection{Kraus's algorithm}{\cite{BKraus}}\label{sec:kraus}
    \cref{alg:kraus} produces an $H^{\times n}$-normal form $\Omega'$ for any $0\neq \Omega\in\cH_n$. The following notation is used. Let $\supp(\Omega)=\{\textbf{v}:\Omega_\textbf{v}\neq 0\}$ denote the support. We may assume without loss of generality that $\textbf{0}\in\supp(\Omega)$, i.e. that $\Omega_\textbf{0}\neq 0$, since $\Omega$ must have some nonzero entry and we can relabel the basis vectors $\ket{0}$ and $\ket{1}$ in the tensor factors of $\cH_n=(\CC^2)^{\otimes n}$. The function arg returns the argument of a complex number.

\begin{algorithm}\caption{Kraus's algorithm}\label{alg:kraus}
\:\newline
    \noindent \textit{Input}: $\Omega\in\cH_n$ such that $\Omega_{\textbf{0}}\neq 0$.

    \noindent \textit{Output}: The unique $\Omega'$ in the $H^{\times n}$-orbit of $\Omega$ such that each entry $\Omega'_{\textbf{v}}$ is real and positive whenever $\textbf{v}\in\mathcal{B}\cup\{0\}$, where $\mathcal{B}\subset\supp(\Omega)$ is constructed from $\supp(\Omega)$ by the algorithm.

    \begin{enumerate}
        \item[1.] Update $\Omega\gets e^{\im t}\Omega$, where $t\in\RR$ such that $e^{\im t}\Omega_\textbf{0}$ is real and positive.
        \item[2.] Construct $\mathcal{B}=\{\textbf{v}^1,\dots,\textbf{v}^m\}$ as follows. First set $\mathcal{B}\gets\emptyset$. Then, going over elements $\textbf{v}\in\supp(\Omega)\setminus\{\textbf{0}\}$ in increasing lex order append $\textbf{v}$ to $\mathcal{B}$ if $\textbf{v}$ is linearly independent over $\mathbb{R}$ from the vectors already in $\mathcal{B}$. Stop once $\mathcal{B}$ spans the same space as $\supp(\Omega)$.
        \item[3.]\label{Kstep3} Compute any row vector $\textbf{t}=(t_1,\dots, t_n)\in\RR^n$ satisfying the system \[
        \begin{pmatrix} t_1 & \dots & t_n\end{pmatrix}\begin{pmatrix} \textbf{v}^1 & \dots & \textbf{v}^m\end{pmatrix}=-\begin{pmatrix}
        \text{arg}(\Omega_{\textbf{v}^1}) &
        \dots &
        \text{arg}(\Omega_{\textbf{v}^m}) 
        \end{pmatrix}.
        \]
        \item[4.] Compute $\Omega'=
        \begin{pmatrix}
        1 & \\
         & e^{\im t_1}
        \end{pmatrix}
        \otimes
        \dots
        \otimes
        \begin{pmatrix}
            1 & \\
             & e^{\im t_n}
        \end{pmatrix}
        \Omega$. 
    \end{enumerate}
    \end{algorithm}
    From step 3 we have $\textbf{t}\textbf{v}^i=t_1\textbf{v}^i_1 + \dots + t_n\textbf{v}^i_n=-\text{arg}(\Omega_{\textbf{v}^i})$ for all $1\leq i \leq m$. Then applying Eq.~\eqref{hbar},
    \[
    \Omega'=\sum_{\textbf{v}\in \mathcal{B}} \exp({-\im \,\text{arg}(\Omega_\textbf{v})}) \Omega_\textbf{v}\ket{\textbf{v}}+\sum_{\textbf{v}\in \supp(\Omega)\setminus \mathcal{B}}  \Omega'_\textbf{v}\ket{\textbf{v}}.
    \]
    This expression for $\Omega'$ shows that the coefficients corresponding to $\textbf{v}\in\mathcal{B}$ are real and positive as claimed. By construction, $\mathcal{B}$ is a basis for the span of $\supp(\Omega)$. Then every $\textbf{v}\in\supp(\Omega)$ is a unique real linear combination $\sum_{i=1}^m\alpha_i \textbf{v}^i$ of vectors $\textbf{v}^i\in\mathcal{B}$. If $\Omega$ is the tensor obtained after step 3, then
    \begin{align*}
    \Omega'_\textbf{v}&= \exp\left(\im \textbf{t}\textbf{v}\right)\Omega_\textbf{v}
    = \exp\left (\im \sum \alpha_i\textbf{t}\textbf{v}^i\right)\Omega_\textbf{v}= \exp\left(-\im \sum\alpha_i\arg (\Omega_{\textbf{v}^i})\right)\Omega_\textbf{v}
    \end{align*}
    for every $\textbf{v}$ in the support. Therefore any $\textbf{t}$ satisfying the system in step 3 produces the same $\Omega'$. It follows that $\Omega'$ is unique. When $m<n$, the solution $\textbf{t}$ is not unique; this means that $\Omega'$ has a nontrivial stabilizer in $\GL_2^{\times n}$.

\subsection{SLOCC group, even case}\label{sec:sleven}
    Recall that $T(A\otimes B)T^*\in\SO_4$
    when $A,B\in\SL_2$ (see \cref{ex:li}).
     Consider the natural action of $\SL_2(\CC)\times \SL_2(\CC)$ on the vector space $\CC^2\otimes\CC^2$. Applying the $k$-fold Kronecker product $T^{\otimes k}$ we obtain
    \begin{equation}\label{slso}
    \tilde{O}T^{\otimes k}\Phi=T^{\otimes k}\tilde{A}\Phi,\quad \forall \Phi\in (\CC^2\otimes\CC^2)^{\otimes k}
    \end{equation}
    where
    $\tilde{A}=(A_1\otimes B_1)\otimes\dots\otimes (A_k\otimes B_k)\in (\SL_2\otimes\SL_2)^{\otimes k}
    $
    is the Kronecker product of $\SL_2$ matrices and $\tilde{O}=T^{\otimes k}\tilde{A}T^{*\otimes k}\in \SO_4^{\otimes k}$ is the Kronecker product of $\SO_4$ matrices. We immediately get the following results.
    \begin{lemma}\label{orbe}
        Two tensors $\Phi,\Phi'\in (\CC^2\otimes\CC^2)^{\otimes k}$ are in the same $(\SL_2\times\SL_2)^{\times k}$-orbit if and only if $T^{\otimes k}\Phi$ and $T^{\otimes k}\Phi'$ are in the same $\SO_4^{\times k}$-orbit. 
    \end{lemma}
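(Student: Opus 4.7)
The plan is to verify each direction directly from the identity \eqref{slso}, together with the isomorphism \eqref{iso} which guarantees that the map $\SL_2\otimes \SL_2 \to \SO_4$, $A\otimes B\mapsto T(A\otimes B)T^*$, is surjective (indeed an isomorphism of groups). Since $T$ is unitary, $T^{\otimes k}$ is an invertible linear operator on $(\CC^2\otimes\CC^2)^{\otimes k}$, so the operations of applying $T^{\otimes k}$ and its inverse $T^{*\otimes k}$ can be freely interchanged.

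First I would handle the forward implication. Suppose $\Phi' = \tilde{A}\Phi$ where $\tilde{A} = (A_1\otimes B_1)\otimes\dots\otimes (A_k\otimes B_k)$ with $A_i, B_i\in\SL_2$. Applying $T^{\otimes k}$ to both sides and using \eqref{slso} gives $T^{\otimes k}\Phi' = \tilde{O}\,T^{\otimes k}\Phi$, where $\tilde{O} = T^{\otimes k}\tilde{A}T^{*\otimes k} = (T(A_1\otimes B_1)T^*)\otimes\dots\otimes(T(A_k\otimes B_k)T^*)$ lies in $\SO_4^{\otimes k}$ by \eqref{iso}. Thus $T^{\otimes k}\Phi$ and $T^{\otimes k}\Phi'$ are in the same $\SO_4^{\times k}$-orbit.

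For the converse, suppose $T^{\otimes k}\Phi' = \tilde{O}\,T^{\otimes k}\Phi$ with $\tilde{O} = O_1\otimes\dots\otimes O_k$ and each $O_i\in\SO_4$. Using the surjectivity from \eqref{iso}, pick $A_i, B_i\in\SL_2$ with $T(A_i\otimes B_i)T^* = O_i$ for each $i$. Then $\tilde{O} = T^{\otimes k}\tilde{A}T^{*\otimes k}$ where $\tilde{A} = (A_1\otimes B_1)\otimes\dots\otimes(A_k\otimes B_k)$, and applying \eqref{slso} again together with unitarity of $T$ yields $T^{\otimes k}\Phi' = T^{\otimes k}\tilde{A}\Phi$, so $\Phi' = \tilde{A}\Phi$.

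The only real subtlety, and the main obstacle if any, lies in the converse direction: one must know that an element of $\SO_4^{\otimes k}$ lifts \emph{factor by factor} to an element of $(\SL_2\otimes\SL_2)^{\otimes k}$, rather than merely to some element of $\SL_2^{\otimes 2k}$ that happens to equal the given tensor product after conjugation by $T^{\otimes k}$. This is exactly what the factor-wise isomorphism \eqref{iso} delivers, so no further argument is needed; the rest is routine manipulation with Kronecker products and the unitarity of $T$.
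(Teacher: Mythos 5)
Your proof is correct and follows essentially the same route as the paper, which simply cites Eq.~\eqref{slso} (together with the isomorphism \eqref{iso}); you have merely spelled out the two directions, correctly noting that the converse needs surjectivity of $\SL_2\otimes\SL_2\to\SO_4$ factor by factor. No gaps.
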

    \begin{proof}
        Follows from Eq.~\eqref{slso}.
    \end{proof}
    
\begin{theorem}\label{thm:sleven} Let $k>1$.
    For each tensor $\Phi\in (\CC^2\otimes\CC^2)^{\otimes k}$ there exists a core tensor $\Omega\in (\CC^2\otimes\CC^2)^{\otimes k}$ such that
    \begin{itemize}
        \item $\Phi=(A_1\otimes B_1\otimes\dots\otimes A_k\otimes B_k)\Omega$ for some $A_i,B_i\in\SL_2$, and
        \item $D_i=(T^{\otimes k}\Omega)_{(i)}(T^{\otimes k}\Omega)_{(i)}^\top $ is a direct sum of symmetrized Jordan blocks in weakly decreasing order for all $1\leq i\leq k$.
    \end{itemize}
The core tensor is unique up to the action of $H_1\times\dots\times H_k$, where $H_i\leq \SL_2\otimes\SL_2$ such that $TH_i T^*\leq\SO_4$ is the stabilizer of $D_i$ by the conjugation action.
\end{theorem}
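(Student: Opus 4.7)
The plan is to reduce the theorem to the orthogonal HOSVD (\cref{thm:OHOSVD}) applied after the change of variables $\Phi \mapsto T^{\otimes k}\Phi$. The justification is \cref{orbe}: the $(\SL_2\times\SL_2)^{\times k}$-orbit of $\Phi\in(\CC^2\otimes\CC^2)^{\otimes k}$ corresponds to the $\SO_4^{\times k}$-orbit of $\Psi:=T^{\otimes k}\Phi\in(\CC^4)^{\otimes k}$ under the Cartesian-product version of the isomorphism \eqref{iso}.

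First I would set $\Psi = T^{\otimes k}\Phi$ and view $\Psi$ as a $k$-fold tensor on $\CC^4\otimes\dots\otimes\CC^4$. Applying \cref{thm:OHOSVD} with each $d_i=4$ produces an orthogonal HOSVD core tensor $\Omega'\in(\CC^4)^{\otimes k}$ and matrices $U_i\in\SO_4$ such that $\Psi=(U_1\otimes\dots\otimes U_k)\Omega'$ and each $\Omega'_{(i)}{\Omega'_{(i)}}^\top$ is a direct sum of symmetrized Jordan blocks in weakly decreasing order. I then set $\Omega := T^{*\otimes k}\Omega'$ and claim this is the core tensor we want.

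To verify the two required properties: using the isomorphism \eqref{iso}, write each $U_i = T(A_i\otimes B_i)T^*$ with $A_i,B_i\in\SL_2$. Since $T$ is unitary,
\begin{align*}
\Phi = T^{*\otimes k}\Psi &= T^{*\otimes k}(U_1\otimes\dots\otimes U_k)T^{\otimes k}\,\Omega \\
&= \bigl((T^*U_1 T)\otimes\dots\otimes(T^*U_k T)\bigr)\Omega = (A_1\otimes B_1\otimes\dots\otimes A_k\otimes B_k)\Omega,
\end{align*}
which gives the first bullet. For the second, $T^{\otimes k}\Omega = \Omega'$, hence $D_i = (T^{\otimes k}\Omega)_{(i)}(T^{\otimes k}\Omega)_{(i)}^\top = \Omega'_{(i)}{\Omega'_{(i)}}^\top$ is already in the prescribed direct-sum form by the choice of $\Omega'$.

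For uniqueness, suppose $\tilde\Omega$ is another core tensor for $\Phi$. Then $\tilde\Omega' := T^{\otimes k}\tilde\Omega$ is also an orthogonal-HOSVD core tensor for $\Psi$ in the sense of \cref{thm:OHOSVD}, so by the uniqueness statement there, $\tilde\Omega'$ and $\Omega'$ are related by an element of $H'_1\times\dots\times H'_k$, where $H'_i\le\SO_4$ is the stabilizer of $D_i$ under conjugation. Pulling back by $T^{*\otimes k}$ (equivalently, conjugating each $U_i\in H'_i$ by $T^*$ on the tensor factor $\CC^2\otimes\CC^2$) shows $\tilde\Omega$ and $\Omega$ differ by an element of $H_1\times\dots\times H_k$ with $H_i := T^*H'_i T \le \SL_2\otimes\SL_2$, so that $TH_iT^* = H'_i$, as claimed. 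The only real thing to check carefully here is the compatibility of the stabilizer transport with the double cover $\SL_2\times\SL_2\to\SO_4$; since \eqref{iso} is already declared an isomorphism of $\SL_2\otimes\SL_2$ with $\SO_4$, this is automatic and does not present an obstacle — the entire proof is essentially a bookkeeping exercise around \cref{thm:OHOSVD} and \cref{orbe}.
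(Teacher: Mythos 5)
Your proof is correct and takes essentially the same route as the paper's first proof: pass to $\Psi=T^{\otimes k}\Phi$, apply the orthogonal HOSVD (\cref{thm:OHOSVD}) to $\Psi$ as an $\SO_4^{\times k}$-module, and transport both existence and the stabilizer-uniqueness statement back through \cref{orbe} and the isomorphism \eqref{iso}. (The paper also sketches a second proof that applies \cref{lem:core} directly to the reduction maps $\pi_i(\Phi)=(T^{\otimes k}\Phi)_{(i)}(T^{\otimes k}\Phi)_{(i)}^\top$, but your argument matches the first.)
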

\begin{proof}[Proof~1]
Given $\Phi\in (\CC^2\otimes\CC^2)^{\otimes k}$ let $\Psi=T^{\otimes k}\Phi$. Applying \cref{thm:OHOSVD} to $\Psi$ in $(\CC^2\otimes\CC^2)^{\otimes k}$ as an $\SO_4^{\times k}$-module, we get a core tensor $\Omega$ with certain properties. These properties are equivalent to the ones above due to \cref{orbe} and the isomorphism \eqref{iso}.
\end{proof}

\begin{proof}[Proof~2]For $1\leq i\leq k$ let $V_i$ be a copy of the $\SL_2\times \SL_2$-module $\CC^2\otimes\CC^2$. Let $\cS_i$ be the space of $4\times 4$ complex symmetric matrices, considered as an $\SL_2\times\SL_2$-module by the same representation as \eqref{eq:symrep} with $\cS_i$ replacing $\cS_{ij}$.
By \cref{ex:li} the map
\begin{gather*}
    \pi_i:V_1\otimes\dots\otimes V_k\to \cS_i\quad\text{where}\quad
    \pi_i(\Phi)=T\Phi_{(i)}J^{\otimes {2(k-1)}}\Phi_{(i)}^\top T^\top
\end{gather*}
is a reduction map.
Observing the alternate expression $\pi_i(\Phi)=(T^{\otimes k}\Phi)_{(i)}(T^{\otimes k}\Phi)_{(i)}^\top$ and applying \cref{lem:core}, we immediately have the existence of a core tensor $\Omega$ with the desired properties.
\end{proof}

\subsubsection{Reducing the group action}\label{sec:reduceGroup} Let $k>1$.
Our goal is to compute SLOCC normal forms for general $\Phi\in \cH_{2k}\cong (\CC^2\otimes\CC^2)^{\otimes k}$.
By \cref{orbe}, this problem is equivalent to computing normal forms for the action of $\SO_4^{\times k}$. That is, we could define the SLOCC normal form $\Omega$ of $\Phi$ by the property that $T^{\otimes k}\Omega$ is the $\SO_4^{\times k}$-normal form of $T^{\otimes k}\Phi$ and vice versa. With this in mind, we now focus on the $\SO_4^{\times k}$-action.

Suppose $\Phi\in(\CC^2\otimes\CC^2)^{\otimes k}$ such that $\Phi_{(i)}\Phi_{(i)}^\top$ has distinct eigenvalues for all $1\leq i\leq k$. Let $\Omega$ be the corresponding core tensor in the sense of orthogonal HOSVD, which will have the property that $\Omega_{(i)}\Omega_{(i)}^\top$ is diagonal with distinct diagonal entries. By \cref{diag}, the core tensor $\Omega$ is unique up to the action of $H^{\times k}$, where
\[
H = \{D\in\SO_4 : \text{$D$ is diagonal}\}.
\]
Now the problem is reduced to finding an easily computable $H^{\times k}$-normal form $\Omega'$ for $\Omega$ since such $\Omega'$ also serves as an $\SO_4^{\times k}$-normal form for $\Phi$. As in the LU case, we present two algorithms for achieving this. The first (\cref{alg:sim}) is simpler but assumes $\Omega$ is not too sparse. The second (\cref{alg:enorm}) is modeled after Kraus's \cref{alg:kraus}.

Another way to understand $H$ is as the group of diagonal matrices $D\in \CC^{4\times 4}$ with diagonal entries from $\{\pm 1\}$, since $DD^\top=I$, and an even number of appearances of $-1$, since $\det(D)=1$.
These matrices can be written as
\[
H = \big\{
\pm I_4,\quad \pm Z\otimes I_2,\quad  \pm I_2\otimes Z,\quad \pm Z\otimes Z
\big\}
\]
where $Z=\begin{pmatrix}
    1 & 0 \\ 0 & -1
\end{pmatrix}$ is the Pauli matrix.
Set $n=2k$. From this expression of $H$ we see that the action of $H^{\times k}$ on $(\CC^2\otimes\CC^2)^{\otimes k}$ is equivalent to the action of $\mathcal{T}^{\times n}$ on $\mathcal{H}_n=(\CC^2)^{\otimes n}$, where $\mathcal{T}=\{\pm I_2,\: \pm Z\}$.
The action of $\mathcal{T}^{\times n}$ can be understood by the following formula for basis vectors, extending to $\cH_n$ by linearity:
\begin{equation}\label{tbar}
\begin{pmatrix}
    1 & \\
    & (-1)^{t_1}\\
    \end{pmatrix}\otimes
    \dots
    \otimes
    \begin{pmatrix}
    1 & \\
    & (-1)^{t_n}\\
    \end{pmatrix}
    \ket{v_1\dots v_n}=(-1)^{t_1v_1+\dots+t_nv_n}\ket{v_1\dots v_n}
\end{equation}
for $t_i\in\{0,1\}$ and $(v_1,\dots,v_n)\in\{0,1\}^n$. While the case $n=2k$ is especially important for us, in \cref{sec:simp,sec:gen} $n$ may be even or odd.

\subsubsection{Simple normal form under orthogonal stabilizers}\label{sec:simp}

For $1\leq i\leq n$ let $\textbf{v}^i\in\{0,1\}^n$ be the ``basis vector'' $\textbf{v}^i=(\delta_{1i},\dots,\delta_{ni})$ with 1 in the $i$th entry and 0's elsewhere, then let $\mathcal{B}=\{\textbf{v}^1,\dots,\textbf{v}^n\}$. \cref{alg:sim} computes the $\mathcal{T}^{\times n}$-normal form $\Omega'$ for $\Omega$ with the assumption that $\text{Re}(\Omega_{\textbf{v}})\neq 0$ whenever $\textbf{v}\in\mathcal{B}\cup\{\textbf{0}\}$.

\begin{algorithm}\caption{Simple normal form under orthogonal stabilizers}\label{alg:sim}
    \noindent \textit{Input}: $\Omega\in\cH_n$ such that $\text{Re}(\Omega_{\textbf{v}})\neq 0$ whenever $\textbf{v}\in\mathcal{B}\cup\{\textbf{0}\}$.

    \noindent \textit{Output}: The unique $\Omega'$ in the $\mathcal{T}^{\times n}$-orbit of $\Omega$ such that $\text{Re}(\Omega'_{\textbf{v}})>0$ whenever $\textbf{v}\in\mathcal{B}\cup\{0\}$.
    \begin{enumerate}
        \item[1.] Update $\Omega\gets (-1)^{t}\Omega$, where $t\in\{0,1\}$ such that $(-1)^t \text{Re}(\Omega_\textbf{0})$ is positive.
        \item[2.] For $1\leq i\leq n$ choose $t_i\in\{0,1\}$ such that $(-1)^{t_i}\text{Re}(\Omega_{\textbf{v}^i})$ is positive.
        \item[3.]  Compute $
        \Omega'\gets
        \begin{pmatrix}
        1 & \\
        & (-1)^{t_1}\\
        \end{pmatrix}\otimes
        \dots
        \otimes
        \begin{pmatrix}
        1 & \\
        & (-1)^{t_n}\\
        \end{pmatrix}
        \Omega.$ 
    \end{enumerate}
\end{algorithm}
    
\subsubsection{General normal form under orthogonal stabilizers}\label{sec:gen} \cref{alg:enorm} produces a $\mathcal{T}^{\times n}$-normal form $\Omega'$ for any $0\neq \Omega\in\cH_n$.
Let $\supp(\Omega)=\{\textbf{v}:\Omega_\textbf{v}\neq 0\}$. We may assume without loss of generality that $\textbf{0}\in\supp(\Omega)$, i.e. that $\Omega_\textbf{0}\neq 0$ since $\Omega$ must have some nonzero entry and we can relabel the basis vectors $\ket{0}$ and $\ket{1}$ in the tensor factors of $\cH_n=(\CC^2)^{\otimes n}$. Define a function $s:\CC\setminus\{0\}\to\{0,1\}$ by
\[
s(z) = \begin{cases}
    0 & \text{if $\text{Re}(z)>0$ or if $\text{Re}(z)=0$ and $\text{Im}(z)>0$}\\
    1 & \text{if $\text{Re}(z)<0$ or if $\text{Re}(z)=0$ and $\text{Im}(z)<0$}
\end{cases}
\]
for $z\in \CC\setminus\{0\}$. We think of $s$ as detecting the ``sign'' of $z$.

\begin{algorithm}\caption{General normal form under orthogonal stabilizers}\label{alg:enorm}
\noindent \textit{Input}: A tensor $\Omega\in\cH_n$.

\noindent \textit{Output}: The unique $\Omega'$ in the $\mathcal{T}^{\times n}$-orbit of $\Omega$ such that $s(\Omega'_{\textbf{v}})=0$ whenever $\textbf{v}\in\mathcal{B}\cup\{0\}$, where $\mathcal{B}\subset\supp(\Omega)$ is constructed from $\supp(\Omega)$ by the algorithm.

\begin{enumerate}
    \item[1.] Update $\Omega\gets (-1)^t \Omega$, where $t\in\{0,1\}$ such that $s((-1)^t \Omega_\textbf{0})=0$.
    \item[2.] Construct $\mathcal{B}=\{\textbf{v}^1,\dots,\textbf{v}^m\}$ as follows. First set $\mathcal{B}\gets \emptyset$. Then, going over elements $\textbf{v}\in\supp(\Omega)\setminus\{\textbf{0}\}$ in increasing lex order append $\textbf{v}$ to $\mathcal{B}$ if $\textbf{v}$ is linearly independent over $\mathbb{F}_2$ from the vectors already in $\mathcal{B}$. Stop once $\mathcal{B}$ spans the same space as $\supp(\Omega)$.
    \item[3.] Compute any row vector $(t_1,\dots, t_n)$ over $\mathbb{F}_2$ satisfying the system \[
    \begin{pmatrix} t_1 & \dots & t_n\end{pmatrix}\begin{pmatrix} \textbf{v}^1 & \dots & \textbf{v}^m\end{pmatrix}=\begin{pmatrix}
    s (\Omega_{\textbf{v}^1}) &
    \dots &
    s(\Omega_{\textbf{v}^m}) 
    \end{pmatrix}.
    \]
    \item[4.] Compute $\Omega'\gets
    \begin{pmatrix}
    1 & \\
     & (-1)^{t_1}
    \end{pmatrix}
    \otimes
    \dots
    \otimes
    \begin{pmatrix}
        1 & \\
         & (-1)^{t_n}
    \end{pmatrix}
    \Omega$.
\end{enumerate}
\end{algorithm}

From step 3 of \cref{alg:enorm} we have $t_1\textbf{v}^i_1 + \dots + t_n\textbf{v}^i_n\equiv s(\Omega_{\textbf{v}^i})$ (mod $2$) for all $1\leq i \leq m$. Then applying Eq.~\eqref{tbar},
\[
\Omega'=\sum_{\textbf{v}\in \mathcal{B}} (-1)^{s(\Omega_\textbf{v})} \Omega_\textbf{v}\ket{\textbf{v}}+\sum_{\textbf{v}\in \supp(\Omega)\setminus \mathcal{B}}  \Omega'_\textbf{v}\ket{\textbf{v}}
\]
which shows that $s(\Omega'_{\textbf{v}})=0$ whenever $\textbf{v}\in\mathcal{B}\cup\{0\}$ as claimed. By construction, $\mathcal{B}$ is a basis for the span of $\supp(\Omega)$. Then every $\textbf{v}\in\supp(\Omega)$ is a unique linear combination of vectors in $\mathcal{B}$. From this fact, together with Eq.~\eqref{tbar}, it follows that $\Omega'$ is unique (by an argument similar the one given in \cref{sec:kraus}).

\subsubsection{The even SLOCC normal form algorithm}\label{sec:evennf} 
For clarity, in \cref{alg:slocc-even} we list the full steps of the procedure to compute SLOCC normal forms for general $\Phi\in\cH_n$, with $n=2k\geq 4$ even.

\begin{algorithm}\caption{SLOCC normal form for general qubits, even case}\label{alg:slocc-even}
    \noindent \textit{Input}: A tensor $\Phi\in\cH_{2k}\cong (\CC^2\otimes\CC^2)^{\otimes k}$ with $k>1$ such that for each $1\leq i\leq k$ the matrix $(T^{\otimes k}\Phi)_{(i)}(T^{\otimes k}\Phi)_{(i)}^\top$ has distinct eigenvalues.

    \noindent \textit{Output}: Normal form $\Omega$ in the SLOCC orbit of $\Phi$.

    \begin{enumerate}
        \item[1.] Set $\Phi'\gets T^{\otimes k}\Phi.$
        \item[2.] Use \cref{alg:OHOSVD} to compute a core tensor $\Omega'$ for $\Phi'$ in the sense of \cref{thm:OHOSVD}.
        \item[3.] Use \cref{alg:enorm} to compute the normal form $\Omega''$ in the $\mathcal{T}^{\times 2k}$-orbit of $\Omega'$.
        \item[4.] Set $\Omega \gets {T^*}^{\otimes k}\Omega''$. 
    \end{enumerate}
\end{algorithm}

\subsubsection{The 4-qubit case}\label{sec:4qubit}
Let $V=\CC^2\otimes\CC^2$. Recall that, by \cref{orbe}, the problem of classifying $(\SL_2\times\SL_2)^{\times 2}$-orbits in $V\otimes V$ is equivalent to that of classifying $\SO_4\times\SO_4$-orbits in $V\otimes V$.
More generally, let us consider $\SO_{d_1}\times\SO_{d_2}$-orbits in $V_1\otimes V_2$, where $d_1=\dim V_1$ and $d_2=\dim V_2$. Assume without loss of generality that $d_1\leq d_2$.

\begin{lemma}\label{lem:ABBA}
    Let $A\in \CC^{d_1\times d_2}$ and $B\in \CC^{d_2\times d_1}$. Then $AB$ and $BA$ have the same nonzero eigenvalues, counting multiplicity.
\end{lemma}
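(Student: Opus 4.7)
The plan is to reduce the statement to the classical square-matrix fact that $MN$ and $NM$ share a characteristic polynomial when $M,N$ are square of equal size, and then read off the eigenvalues from a block computation.

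First I would pad $A$ and $B$ into square matrices of size $d_1+d_2$ by setting
\[
M := \begin{pmatrix} 0 & A \\ 0 & 0 \end{pmatrix},\qquad N := \begin{pmatrix} 0 & 0 \\ B & 0 \end{pmatrix},
\]
so that a direct block multiplication gives
\[
MN = \begin{pmatrix} AB & 0 \\ 0 & 0 \end{pmatrix},\qquad NM = \begin{pmatrix} 0 & 0 \\ 0 & BA \end{pmatrix}.
\]
From this, the characteristic polynomials factor as $\det(\lambda I - MN)=\lambda^{d_2}\det(\lambda I_{d_1}-AB)$ and $\det(\lambda I - NM)=\lambda^{d_1}\det(\lambda I_{d_2}-BA)$.

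Next I would invoke the standard result that square matrices $M,N$ of the same size satisfy $\det(\lambda I - MN)=\det(\lambda I - NM)$. The quick justification is: if $M$ is invertible, then $NM = M^{-1}(MN)M$ is similar to $MN$; for general $M$, perturb via $M+\epsilon I$ (invertible off a finite set of $\epsilon$) and pass to the limit, using continuity of the coefficients of the characteristic polynomial. Combining with the block identities yields
\[
\lambda^{d_2}\det(\lambda I_{d_1}-AB)=\lambda^{d_1}\det(\lambda I_{d_2}-BA).
\]

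Finally, I would conclude by reading off roots: this polynomial identity shows that the two sides have identical factorizations over $\CC$ outside of possibly differing powers of $\lambda$. Hence the nonzero roots of $\det(\lambda I_{d_1}-AB)$ and $\det(\lambda I_{d_2}-BA)$ coincide with matching multiplicities, which is the desired statement. There is no real obstacle here; the only subtlety is the square-case lemma, which is handled by the similarity/perturbation argument just mentioned.
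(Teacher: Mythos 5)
Your proof is correct and complete. Note that the paper does not actually prove this lemma --- it simply cites Theorem~2.8 of Zhang's matrix theory text --- so you have supplied a self-contained argument where the authors deferred to a reference. Your argument is the standard one: the block computation with $M=\left(\begin{smallmatrix}0&A\\0&0\end{smallmatrix}\right)$ and $N=\left(\begin{smallmatrix}0&0\\B&0\end{smallmatrix}\right)$ is verified correctly (the partitions are compatible with $A\in\CC^{d_1\times d_2}$, $B\in\CC^{d_2\times d_1}$), and the resulting identity $\lambda^{d_2}\det(\lambda I_{d_1}-AB)=\lambda^{d_1}\det(\lambda I_{d_2}-BA)$ immediately gives equality of the nonzero spectra with multiplicity. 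The one auxiliary fact you lean on --- that $MN$ and $NM$ have equal characteristic polynomials for square $M,N$ of the same size --- is handled adequately by your similarity-plus-perturbation argument; the only point worth making explicit is that the set of $\epsilon$ for which $M+\epsilon I$ is singular is finite (the eigenvalues of $-M$), so the limit is taken along a genuinely dense set, and continuity of the characteristic polynomial coefficients in the entries closes the gap. If you want to avoid the limiting step altogether, you can derive the displayed polynomial identity in one stroke from two Schur-complement factorizations of the block matrix $\left(\begin{smallmatrix}\lambda I_{d_1}&A\\ B&I_{d_2}\end{smallmatrix}\right)$, but that is a matter of taste rather than a correction.
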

\begin{proof}
    See Theorem 2.8 in {\cite{zhang2011matrix}}.
\end{proof}

Suppose $M\in\CC^{d_1\times d_2}$ is general in the sense that $M M^\top$ has distinct eigenvalues, none of them 0. By \cref{lem:ABBA}, $M^\top M$ has $d_1$ nonzero eigenvalues, so the rank of $M^\top M$ is $d_1$ and $M^\top M$ is semisimple (otherwise the rank would exceed $d_1$). A tensor $\Phi\in V_1\otimes V_2$ corresponds to the $d_1\times d_2$ matrix $M=\Phi_{(1)}$.
By this correspondence, \cref{thm:OHOSVD} implies that there exists a core matrix $\Omega\in \CC^{d_1\times d_2}$ such that
\begin{itemize}
    \item $M=U_1 \Omega U_2$, where $U_1\in\SO_{d_1}$ and $U_2\in\SO_{d_2}$,
    \item $\Omega\Omega^\top = D_1$ and $\Omega^\top\Omega = D_2$, where $D_1$ and $D_2$ are diagonal.
\end{itemize}
From the second property we have $D_1\Omega=\Omega\Omega^\top\Omega=\Omega D_2$. This means that multiplying the rows of $\Omega$ by the eigenvalues of $D_1$ is the same as multiplying the columns of $\Omega$ by the eigenvalues of $D_2$. By \cref{lem:ABBA}, $D_1$ and $D_2$ have the same nonzero eigenvalues. By choice of $M$, there are $d_1$ nonzero eigenvalues and they are distinct. From this we conclude that $\Omega$ has at most one nonzero entry in each row and in each column. Acting by signed permutation matrices $P\in \SO_{d_1}$ and $Q\in\SO_{d_2}$ respectively on the left and right, we can permute rows and columns to obtain a matrix of the form
\[
P\Omega Q=
\begin{pmatrix}
\lambda_1 & \dots & 0 & 0 & \dots & 0\\
\vdots & \ddots &  \vdots & \vdots & & \vdots\\
0 & \dots & \lambda_{d_1} & 0 & \dots & 0\\
\end{pmatrix}
\]
where $\lambda_1,\dots,\lambda_{d_1}\in\CC$. Thus, the orbit of a general $M\in \CC^{d_1\times d_2}$ intersects the $d_1$-dimensional space of such matrices.

This result is a special case of \cite[Theorem~2.10]{ChtDjo:NormalFormsTensRanksPureStatesPureQubits}. The theorem of Chterental and Djokovic classifies the various non-general orbits as well, whereas our work generalizes to an arbitrary number of tensor products of $\CC^2$.

\subsubsection{Failure in a special case}\label{sec:specialFailure}
Let us work out an example to illustrate what can happen when $\Phi$ fails the genericity condition. Consider the matrix multiplication tensor
\[
\Phi=\sum_{i,j,k=0}^1 \ket{ik}\otimes \ket{ij}\otimes\ket{jk}\in (\CC^2\otimes\CC^2)^{\otimes 3},
\]
named so because $\Phi$ corresponds to the bilinear form $\CC^{2\times 2}\times \CC^{2\times 2}\to \CC^{2\times 2}$ mapping $(A,B)\mapsto AB$. A computation shows that $\pi_i(\Phi)=2I_4$ for $i=1,2,3$, where $\pi_i(\Phi)=(T^{\otimes 3}\Phi)_{(i)}(T^{\otimes 3}\Phi)_{(i)}^\top$. Thus $\Phi$ is a core tensor for itself in the sense of \cref{thm:sleven}. We do not consider $\Phi$ to be general since every $\pi_i(\Phi)$ has only one eigenvalue, not counting multiplicity. The stabilizer subgroup of $2I_4$ in $\SO_4$ is the entire group $\SO_4$. Then \cref{thm:sleven} tells us that the core tensor is unique up to the action of $H^{\times 3}$, where $H=T^* \SO_4 T=\SL_2\otimes\SL_2.$ In other words, every tensor in the SLOCC orbit of $\Phi$ is a core tensor. This represents the worst-case scenario: \cref{thm:sleven} does not reduce the problem of detecting tensors in the SLOCC orbit of the matrix multiplication tensor to the action of a smaller group, let alone a finite group.

\subsection{SLOCC group, odd case}\label{sec:odd}
Let $n$ be odd and for $1\leq i\leq n$ let $V_i$ be a copy of the $\SL_2$-module $\CC^2$. Let $\cS_i\cong S^2 V_i$ be the space of $2\times 2$ complex symmetric matrices considered as an $\SL_2$-module by the action $A.M=AMA^\top$ for $A\in\SL_2$ and $M\in\cS_i$. For each $i$ define the map $\pi_i$ as follows:
\begin{equation}\label{eq:oddtrout}
\begin{gathered}\xymatrix@R-1.5pc{
\pi_i:V_1\otimes\dots\otimes V_n \ar[r]&\cS_i\cong S^2 V_i
\\
\Phi \ar@{|->}[r]& \Phi_{(i)}J^{\otimes (n-1)}\Phi_{(i)}^\top.
}\end{gathered}
\end{equation}
By \cref{ex:slt}, the map $\pi_i$ is a reduction map.

To find normal forms in $\mathcal{S}_i$, we use the fact that any complex symmetric matrix $M\in\mathcal{S}_i$ admits a factorization $M=ADA^\top $ where $A\in \GL_2$ and $D=I_2$, $\text{diag}(1,0)$ or $0$. Indeed, every nondegenerate complex quadratic form in the variables $x_1,\dots,x_n$ is equivalent to the form $x_1^2+\dots+x_n^2$; see \cite[pp. 97-98]{FultonHarris}. If we require that $A$ is in the smaller group $\SL_2$, we have a factorization $M=ADA^\top$ where $D$ equals $z I_2$, $\text{diag}(z,0)$ or 0 for some $z\in\mathbb{C}$ unique up to multiplication by $-1$. The parameter $z$ is not unique because its argument can be flipped by the action of the matrix
\begin{equation}\label{eq:k}
K=\begin{pmatrix} 0 & \im \\ \im & 0 \end{pmatrix}\in\SL_2.
\end{equation}
Thus, we define a normal form function $F\colon \mathcal{S}_i\to\mathcal{S}_i$ by setting $F(M)=zI_2$, where $z$ is the greater of the two choices in lexicographical order on $\CC$.

Alternatively, one can write $\pi_i(\Phi)$ as $\Phi_{(i)}{T^*}^{\otimes (n-1)/2}(\Phi_{(i)}{T^*}^{\otimes (n-1)/2})^\top$ and prove that $\pi_i$ is a reduction map using the fact that \eqref{iso} is an isomorphism. With this setup, we obtain the following lemma.

\begin{lemma}\label{lem:preodd}
    Let $n$ be odd. For each tensor $\Phi\in\cH_n$ there exists a core tensor $\Psi\in\cH_n$ such that
    \begin{itemize}
        \item $\Phi=(A_1\otimes\dots\otimes A_n)\Psi$ for some $A_i\in\SL_2$, and
        \item $D_i=\Psi_{(i)}J^{\otimes (n-1)}\Psi_{(i)}^\top $ equals $z_i I_2$, $\text{diag}(z_i,0)$ or $0$ where $z_i\in\CC$ and $z_i>-z_i$ in lex order for all $1\leq i\leq n$.
    \end{itemize}
The core tensor $\Psi$ is unique up to the action of $H_1\times\dots\times H_{n}$, where $H_i\leq \SL_2$ is the stabilizer of $D_i$ with respect to the action $A.D_i=AD_i A^\top$ for $A\in \SL_2$.
\end{lemma}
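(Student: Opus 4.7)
My plan is to invoke \cref{lem:core}, whose hypotheses are already set up: the reduction maps $\pi_i$ of \eqref{eq:oddtrout} are shown to be $\SL_2^{\times n}$-equivariant in \cref{ex:slt}. The only substantive task is to construct an explicit normal form function $F\colon\cS_i\to\cS_i$ for the $\SL_2$-action on $\cS_i\cong S^2\CC^2$, after which the conclusion is a direct specialization of \cref{lem:core}.

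To build $F$, I first classify the $\SL_2$-orbits on $\cS_i$ by rank. Since $\det(AMA^\top)=\det(A)^2\det(M)=\det(M)$ for $A\in\SL_2$, the determinant (and hence the rank) is an $\SL_2$-invariant. By the classical normalization of a nondegenerate complex quadratic form to a sum of squares cited in the paragraph preceding \eqref{eq:k}, any rank-$2$ matrix $M$ is $\GL_2$-equivalent to $I_2$; rescaling that $\GL_2$ element by $\sqrt{\det M}$ produces $A\in\SL_2$ with $AMA^\top=zI_2$ and $z^2=\det M$. The sign of $z$ is the only remaining freedom, and it can be flipped inside $\SL_2$ by the matrix $K$ of \eqref{eq:k}, using $K(zI_2)K^\top=zKK^\top=-zI_2$. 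In the rank-$1$ case, $M=vv^\top$ for some nonzero $v\in\CC^2$, and since $\SL_2$ acts transitively on $\CC^2\setminus\{0\}$, all nonzero rank-$1$ symmetric matrices lie in a single orbit with representative $\diag(z,0)$. Rank $0$ is the singleton $\{0\}$. Defining $F(M)$ to be $zI_2$, $\diag(z,0)$, or $0$ according to rank, with $z$ chosen lex-greater than $-z$, yields a well-defined normal form function.

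Applying \cref{lem:core} with $G_i=\SL_2$, $\cS=\cH_n$, reduction maps $\pi_i$, and $F_i=F$ then produces a core tensor $\Psi$ with $\Phi=(A_1\otimes\cdots\otimes A_n).\Psi$ for some $A_i\in\SL_2$ and $\pi_i(\Psi)=F(\pi_i(\Psi))=D_i$ of the asserted form for each $i$, along with the uniqueness statement up to $H_1\times\cdots\times H_n$ where $H_i=\Stab_{\SL_2}(D_i)$. I expect the only delicate point to be verifying that the sign-normalization of $z$ can actually be absorbed by an $\SL_2$ element (not just a $\GL_2$ element); the matrix $K$ handles this. Everything else amounts to unwinding definitions and invoking the workhorse lemma.
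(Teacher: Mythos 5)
Your proposal is correct and follows essentially the same route as the paper: the paper's proof is literally ``Apply \cref{lem:core},'' with the reduction maps supplied by \cref{ex:slt} and the normal form function on $S^2\CC^2$ (including the sign flip via $K$) set up in the paragraphs preceding the lemma. Your rank-by-rank orbit classification is a slightly more explicit version of that same setup, and your observation that all nonzero rank-one symmetric matrices form a single $\SL_2$-orbit is in fact a touch more careful than the paper's phrasing.
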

\begin{proof}
    Apply \cref{lem:core}.
\end{proof}
\begin{theorem}\label{thm:odd}
    Let $n\geq 5$ be odd. For general $\Phi\in\cH_n$ there exists a core tensor $\Omega\in \cH_n$, unique up to sign, such that
\begin{itemize}
    \item $\Phi=(A_1\otimes\dots\otimes A_n)\Omega$ for some $A_i\in\SL_2$,
    \item For all $1\leq i\leq n$, $\Omega_{(i)}J^{\otimes (n-1)}\Omega_{(i)}^\top=z_i I_2$ with $z_i>-z_i$ in lex order and $\Omega_{(i)}\Omega_{(i)}^\top =\text{diag}(\lambda_1,\lambda_2)$ with $\lambda_1>\lambda_2$ in lex order.
\end{itemize}
\end{theorem}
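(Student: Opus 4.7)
My plan is to apply \cref{lem:preodd} to produce an intermediate core tensor $\Psi$ that is unique up to the residual $\SO_2^{\times n}$-action, then apply \cref{lem:core} a second time with reduction maps adapted to the smaller group $\SO_2^{\times n}$.

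First I apply \cref{lem:preodd}. The generality hypothesis on $\Phi$ is taken to include the condition $\det(\Phi_{(i)}J^{\otimes(n-1)}\Phi_{(i)}^\top)\neq 0$ for every $i$; by \cref{lem:ae} this holds on an open, dense, full-measure subset. Since the three possible normal forms $z_i I_2,\,\diag(z_i,0),\,0$ are distinguished by the $\SL_2$-invariant determinant, this forces the first case with $z_i\neq 0$, so $\Psi_{(i)}J^{\otimes(n-1)}\Psi_{(i)}^\top=z_i I_2$ with $z_i>-z_i$ in lex order. The stabilizer of $z_i I_2$ under the action $A.M=AMA^\top$ is $\{A\in\SL_2:AA^\top=I_2\}=\SO_2$, so $\Psi$ is determined uniquely up to $\SO_2^{\times n}$.

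Next I define reduction maps $\rho_i\colon\cH_n\to S^2\CC^2$ by $\rho_i(\Phi)=\Phi_{(i)}\Phi_{(i)}^\top$. For $(U_1,\dots,U_n)\in\SO_2^{\times n}$, the matrix $\widehat{U_i}$ is a Kronecker product of orthogonal matrices and hence itself orthogonal, so $\widehat{U_i}^\top\widehat{U_i}=I$. Then \eqref{flatact} gives
\[
\rho_i((U_1,\dots,U_n).\Phi)=U_i\Phi_{(i)}\widehat{U_i}^\top\widehat{U_i}\Phi_{(i)}^\top U_i^\top=U_i\rho_i(\Phi)U_i^\top,
\]
so each $\rho_i$ is a reduction map for the $\SO_2^{\times n}$-action. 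I strengthen the generality hypothesis to also require that $\rho_i(\Psi)$ has two distinct eigenvalues; this is well-defined on $\Phi$ because any two valid choices of $\Psi$ are related by $\SO_2^{\times n}$-conjugation, which preserves the spectrum of $\rho_i(\Psi)$. Under this condition, \cref{alg:orthoSVD} provides an $\SO_2$-normal form $F_i$ sending $\rho_i(\Psi)$ to $\diag(\lambda_1,\lambda_2)$ with $\lambda_1>\lambda_2$ in lex order.

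Applying \cref{lem:core} to the $\SO_2^{\times n}$-orbit of $\Psi$ with reduction maps $\rho_i$ and normal forms $F_i$ yields $\Omega$ with $\Omega_{(i)}\Omega_{(i)}^\top=\diag(\lambda_1,\lambda_2)$. The first condition is automatically preserved because $U_i(z_i I_2)U_i^\top=z_i I_2$ for any $U_i\in\SO_2$. Uniqueness follows from \cref{diag}: the stabilizer of $\diag(\lambda_1,\lambda_2)$ in $\SO_2$ is the set of diagonal elements of $\SO_2$, namely $\{\pm I_2\}$, and the action of $\{\pm I_2\}^{\times n}$ on $\Omega$ factors through the overall sign $\prod_{i=1}^n\epsilon_i\in\{\pm 1\}$, giving uniqueness up to sign. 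The main subtlety I anticipate is justifying that each genericity condition defines a nonempty full-measure open subset of $\cH_n$, particularly the distinct-eigenvalues condition on $\rho_i(\Psi)$, which requires checking that it does not vanish identically on $\cH_n$ and is the complement of the zero set of a locally holomorphic function in $\Phi$.
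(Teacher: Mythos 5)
Your two-stage structure (first reduce to the $\SO_2^{\times n}$-stabilizer via \cref{lem:preodd}, then apply the orthogonal reduction maps $\Phi\mapsto\Phi_{(i)}\Phi_{(i)}^\top$ and \cref{lem:core} to cut the residual group down to $\{\pm I_2\}^{\times n}$, whose action factors through a global sign) is exactly the paper's argument, and that part of your write-up is correct. However, there is a genuine gap, and it is precisely the point you flag at the end as ``the main subtlety I anticipate'': you never verify that the genericity conditions are satisfiable, i.e.\ that the set of $\Phi$ with $\det\bigl(\Phi_{(i)}J^{\otimes(n-1)}\Phi_{(i)}^\top\bigr)\neq 0$ and with $\Psi_{(i)}\Psi_{(i)}^\top$ having distinct eigenvalues is \emph{nonempty}. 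Without a witness, \cref{lem:ae} gives nothing (it requires the relevant holomorphic functions to be not identically zero), and the theorem as you have proved it could be vacuous. This is not a formality: the paper isolates the claim as \cref{lem:gen} and devotes \cref{sec:genlem} to it, where the argument requires (i) showing that $\Psi$ depends on $\Phi$ through a function that is holomorphic on an open full-measure set (the entries of the $L_i$ in \cref{alg:slocc-odd} involve square roots and fourth roots of polynomial expressions, so this needs care), and (ii) exhibiting an explicit tensor, namely $(1-2^{2k-1})\ket{\textbf{0}}+\sum_{\textbf{v}\in\mathcal{E}\setminus\{\textbf{0}\}}\ket{\textbf{v}}$ with $\mathcal{E}$ the even-weight vectors, for which $\Phi_{(i)}J^{\otimes(n-1)}\Phi_{(i)}^\top$ is a nonzero multiple of $I_2$ (so $\Psi=\Phi$) and $\Phi_{(i)}\Phi_{(i)}^\top$ has distinct eigenvalues. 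A telling symptom of the gap is that your argument never uses the hypothesis $n\geq 5$: in the paper that hypothesis enters only in the witness computation, where the two candidate eigenvalues differ by $(2^{2k-1}-1)^2-1$, which vanishes exactly when $n=3$ (and indeed the paper must treat $n=3$ separately in \cref{sec:3f}). To complete your proof you must supply such a witness, or some other argument that the distinct-eigenvalue locus is nonempty.
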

\begin{proof}
     Let $\pi_i(\Phi)=\Phi_{(i)}J^{\otimes (n-1)}\Phi_{(i)}^\top$. Suppose $\Phi\in\cH_n$ such that the matrix $\pi_i(\Phi)$ is invertible for all $1\leq i\leq n$. Then, by \cref{lem:preodd}, there exists a core tensor $\Psi$ in the SLOCC orbit of $\Phi$ such that $\pi_i(\Psi)=z_i I_2$ is a scalar multiple of the identity. The stabilizer of $z_i I_2$ in $\SL_2$ is the orthogonal group $\SO_2$. Therefore the core tensor $\Psi$ is unique up to the action of $\SO_2^{\times n}$.

     Suppose further that $\Psi_{(i)}\Psi_{(i)}^\top$ has distinct eigenvalues for each $i$. Then, by \cref{thm:OHOSVD}, there exists a core tensor $\Omega$ in the $\SO_2^{\times n}$-orbit of $\Psi$ (hence in the SLOCC orbit of $\Phi$) such that $\Omega_{(i)}\Omega_{(i)}^\top$ is diagonal with decreasing diagonal entries for all $i$. By \cref{diag}, $\Omega$ is unique up to the action of $H^{\times n}$, where $H=\{\pm I_2\}$. Thus $\Omega$ is unique up to sign.
     
     It remains to prove that the existence of $\Omega$ is generic. That is, $\Psi_{(i)}\Psi_{(i)}^\top$ having distinct eigenvalues must be a generic property of $\Phi\in\cH_n$. We state this fact in the following lemma and postpone the proof to \cref{sec:genlem}.
\end{proof}
\begin{lemma}[Genericity lemma]\label{lem:gen}
    Let $n\geq 5$ be odd, $\Phi\in\cH_n$ be general, and $\Psi$ be the corresponding core tensor in the sense of \cref{lem:preodd}. Then each $\Psi_{(i)}\Psi_{(i)}^\top$ has distinct eigenvalues.
\end{lemma}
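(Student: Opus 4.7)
My plan is to extract an explicit formula for the discriminant of $\Psi_{(i)}\Psi_{(i)}^\top$ as a rational function of the entries of $\Phi$, reduce the problem to showing this function is not identically zero via \cref{lem:ae}, and then produce an explicit witness.

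Writing $\Phi = (A_1 \otimes \cdots \otimes A_n)\Psi$ as in \cref{lem:preodd}, the transformation rule \eqref{flatact} gives $\Psi_{(i)} = A_i^{-1} \Phi_{(i)} \widehat{A_i}^{-\top}$, and therefore $\Psi_{(i)}\Psi_{(i)}^\top = A_i^{-1} \Phi_{(i)} (\widehat{A_i}\widehat{A_i}^\top)^{-1} \Phi_{(i)}^\top A_i^{-\top}$. The reduction map relation $\pi_j(\Phi) = A_j \pi_j(\Psi) A_j^\top$, combined with the normalization $\pi_j(\Psi) = z_j I_2$, yields $A_j A_j^\top = z_j^{-1} \pi_j(\Phi)$ and $z_j^2 = \det \pi_j(\Phi)$. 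A short computation using $\det A_i = 1$ and the cyclic property of the trace then gives
\[
\operatorname{tr}(\Psi_{(i)}\Psi_{(i)}^\top) = \Bigl(\prod_{j=1}^n z_j\Bigr) \operatorname{tr}\bigl(\pi_i(\Phi)^{-1} X_i\bigr), \qquad \det(\Psi_{(i)}\Psi_{(i)}^\top) = \Bigl(\prod_{j \neq i} z_j^2\Bigr) \det X_i,
\]
where $X_i := \Phi_{(i)} \bigl(\bigotimes_{j \neq i} \pi_j(\Phi)^{-1}\bigr) \Phi_{(i)}^\top$. Crucially, the discriminant $\Delta_i := \operatorname{tr}(\Psi_{(i)}\Psi_{(i)}^\top)^2 - 4\det(\Psi_{(i)}\Psi_{(i)}^\top)$ depends only on the squares $z_j^2 = \det\pi_j(\Phi)$, so $\Delta_i$ is a rational function of the entries of $\Phi$ on the Zariski open set $U := \{\Phi : \det\pi_j(\Phi) \neq 0,\ 1 \leq j \leq n\}$. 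By \cref{lem:ae}, once each $\Delta_i$ is shown to be not identically zero on $U$, the set $\{\Phi \in U : \Delta_i(\Phi) \neq 0 \text{ for all } i\}$ is open of full measure in $\cH_n$, which is exactly what the lemma asks.

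Thus the whole argument reduces to exhibiting a single $\Phi_0 \in U$ with $\Delta_i(\Phi_0) \neq 0$ for each $i$; this is the main obstacle. Highly symmetric candidates fail: for the GHZ state $\Phi_0 = \ket{0}^{\otimes n} + \ket{1}^{\otimes n}$ a direct calculation gives $\pi_i(\Phi_0) = X_i = \begin{pmatrix} 0 & 1 \\ 1 & 0 \end{pmatrix}$, so $\pi_i(\Phi_0)^{-1} X_i = I_2$ and $\Delta_i = 0$. An asymmetric ansatz is therefore required. A natural candidate is a low-support perturbation $\Phi_0 = \ket{0}^{\otimes n} + \ket{1}^{\otimes n} + \sum_{\mathbf{v} \in S} c_{\mathbf{v}} \ket{\mathbf{v}}$ with $S$ chosen to contain enough complement-pairs $\{\mathbf{v}, \bar{\mathbf{v}}\}$ to keep the $\pi_j(\Phi_0)$ invertible, while breaking the bit-flip symmetry that forces $\pi_i(\Phi)^{-1} X_i$ to be a scalar matrix. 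For $n = 5$ the nonvanishing of $\Delta_i$ at such a perturbation can be verified symbolically with a handful of free parameters; for larger odd $n$, either an analogous small-support ansatz or a construction that embeds the $n = 5$ witness into $\cH_n$ via a tensor product with a carefully chosen vector in $\cH_{n-5}$ should complete the argument.
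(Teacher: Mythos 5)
Your reduction is sound and follows the same architecture as the paper's proof: show the discriminant $\Delta_i$ of $\Psi_{(i)}\Psi_{(i)}^\top$ is a well-defined holomorphic (in your version, rational) function of $\Phi$ on a dense open set, invoke \cref{lem:ae}, and then exhibit one point where it does not vanish. Your observation that $\Delta_i$ depends only on $z_j^2=\det\pi_j(\Phi)$ and is therefore rational in the entries of $\Phi$ is in fact slightly cleaner than the paper's argument, which works with the explicit $L_i$ (involving square and fourth roots) and settles for holomorphy on a full-measure subset. You also correctly diagnose that the GHZ state fails as a witness and that the witness is the crux of the lemma.

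However, the proof is not complete, because the witness is never produced: ``can be verified symbolically with a handful of free parameters'' and ``should complete the argument'' are exactly the step the lemma exists to supply. Worse, your proposed fallback for large $n$ --- embedding an $n=5$ witness as $\Phi_5\otimes v$ with $v\in\cH_{n-5}$ --- cannot work. For any mode $j$ lying in the $\cH_{n-5}$ factor, the matrix $\pi_j(\Phi_5\otimes v)$ factors through the full contraction $\Phi_5^\top J^{\otimes 5}\Phi_5$, which is identically zero because $J^{\otimes 5}$ is skew-symmetric; hence $\pi_j(\Phi_5\otimes v)=0$ and the invertibility hypothesis of \cref{lem:preodd} fails at every such $\Phi$. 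The paper closes the gap with a single explicit witness valid for all odd $n=2k+1\geq 5$, namely $\Phi=(1-2^{2k-1})\ket{\mathbf 0}+\sum_{\mathbf v\in\mathcal E\setminus\{\mathbf 0\}}\ket{\mathbf v}$ with $\mathcal E$ the set of even-weight indices: a parity argument shows each $\pi_i(\Phi)$ equals $-2^{2k-1}I_2$ (so $\Psi=\Phi$ and $\Phi\in U$), while $\Phi_{(i)}\Phi_{(i)}^\top=\operatorname{diag}\bigl((2^{2k-1}-1)^2+(2^{2k-1}-1),\,2^{2k-1}\bigr)$ has distinct entries precisely when $k\geq 2$. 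Until you supply a witness of comparable explicitness (and verify it for \emph{all} odd $n\geq 5$, not just $n=5$), the argument has a genuine gap.
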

\Cref{thm:odd} immediately gives us a way to find normal forms for general $\Phi\in\cH_{n}$ when $n\geq 5$. Given $\Phi$, there exists a core tensor $\Omega$ that is unique up to sign. So pick $\Omega$ or $-\Omega$ to be the normal form; one way to make this selection is described in step 5 of \cref{alg:slocc-odd}. The case where $n=3$ is discussed in \cref{sec:3f}.
\subsubsection{The odd SLOCC normal form algorithm} \label{sec:oddNF}
We now describe an algorithm for computing SLOCC normal forms for general $\Phi\in\cH_n$, with $n\geq 5$ odd. First, we need a way to compute the normal form function $F\colon \mathcal{S}_i\to\mathcal{S}_i$ on the space of $2\times 2$ complex symmetric matrices. If $M\in\mathcal{S}_i$ is invertible, we can do this by $M\mapsto LML^\top=\sqrt{\delta}I_2$, where $\delta=\det(M)$ and $L\in\SL_2$ is given by
\[
L=\begin{dcases}
    \begin{pmatrix}
        M_{11}^{-1/2}\delta^{1/4} & 0 \\
        0 & M_{11}^{1/2}\delta^{-1/4}
    \end{pmatrix}\begin{pmatrix}
        1 & 0 \\
        -M_{12}M_{11}^{-1} & 1
    \end{pmatrix}  & \text{if $M_{11}\neq 0$,} \\
    \begin{pmatrix}
        M_{22}^{1/2}\delta^{-1/4} & 0 \\
        0 & M_{22}^{-1/2}\delta^{1/4}
    \end{pmatrix}\begin{pmatrix}
        1 & -M_{12}M_{22}^{-1} \\
        0 & 1
    \end{pmatrix} & \text{if $M_{22}\neq 0$}, \\
    \begin{pmatrix}
        e^{\im \pi/4} & 0\\
        0 & e^{-\im \pi/4}
    \end{pmatrix}\frac{\im }{\sqrt{2}}\begin{pmatrix}
        -1 & 1 \\
        1 & 1
    \end{pmatrix}& \text{if $M_{11}=M_{22}=0$.} \\
\end{dcases}
\]
If $LML^\top$ is not in normal form (i.e. if $\sqrt{\delta}<-\sqrt{\delta}$ in lex order), then apply $M\mapsto KLML^\top K^\top$, where $K$ is defined in \eqref{eq:k}.

In \cref{alg:slocc-odd}, $\pi_i$ is the map defined in \eqref{eq:oddtrout}.
Steps 1-3 compute the core tensor $\Psi$ in the sense of \cref{lem:preodd}. Step 4 computes the core tensor $\Omega$ in the sense of \cref{thm:odd}. Then step 5 picks out the normal form from the choices $\pm\Omega$.

\begin{algorithm}\caption{SLOCC normal form for general qubits, odd case}\label{alg:slocc-odd}
    \noindent\textit{Input}: A tensor $\Phi\in\cH_{n}$ general in the sense of \cref{thm:odd}, $n\geq 5$ odd.
    
    \noindent\textit{Output}: Normal form $\Omega$ in the SLOCC orbit of $\Phi$.

    \begin{enumerate}
        \item[1.] For $1\leq i\leq n$ use the formula above to compute $L_i\in \SL_2$ such that $L_i\pi_i(\Phi) L_i^\top=\sqrt{\delta_i} I_2$, where $\delta_i=\det(\pi_i(\Phi))$.
        \item[2.] Set $\Psi\gets (L_1\otimes\dots\otimes L_{n})\Phi$ so that $\pi_i(\Psi)=\sqrt{\delta_i} I_2$ for all $i$.
        \item[3.] Update $\Psi\gets (A_1\otimes\dots\otimes A_{n})\Psi$, where $A_i$ equals $K$ [see \eqref{eq:k}] if $\pi_i(\Psi)=\sqrt{\delta_i} I_2$ is not in normal form, i.e. if $\sqrt{\delta_i} < -\sqrt{\delta_i}$ in lex order, otherwise $A_i=I_2$.
        \item[4.] Use \cref{alg:OHOSVD} to compute a core tensor $\Omega$ for $\Psi$ in the sense of \cref{thm:OHOSVD}.
        \item[5.] If the first nonzero entry $a\in\CC$ of $\Omega$ is less than $-a$ in lex order, update $\Omega\gets -\Omega$.  
    \end{enumerate}
\end{algorithm}

\subsubsection{Proof of the genericity lemma}\label{sec:genlem}
We now turn to the proof of \cref{lem:gen}. Let $n>1$ be odd. Suppose $\Phi\in \cH_n$ such that $M_i=\Phi_{(i)}J^{\otimes (n-1)}\Phi_{(i)}^\top$ is invertible and $(M_i)_{11}\neq 0$ for all $1\leq i\leq n$. Feed $\Phi$ into \cref{alg:slocc-odd}. The entries of each matrix $L_i$ computed in step 1 are functions of $M_i$. Moreover, the entries of $M_i$ are quadratic forms in $\CC[\cH_n]$. Therefore the tensor $\Psi = (L_1\otimes\dots\otimes L_{n})\Phi$ computed in step 2 is a function of $\Phi$. In step 4, \cref{alg:OHOSVD} is applied; for this to be possible, the discriminant of the characteristic polynomial
\[
p_i(\Psi)=\text{disc}(\det(\lambda I_2-\Psi_{(i)}\Psi_{(i)}^\top))=\text{tr}(\Psi_{(i)}\Psi_{(i)}^\top)^2 - 4\det(\Psi_{(i)}\Psi_{(i)}^\top)
\]
must not vanish (this is the distinct-eigenvalues condition in \cref{lem:gen}). This is not affected by step 3. By the discussion above, $p_i:W_i\to \CC$ is a function on an open, dense, full-measure subset $W_i\subset\cH_n$. Moreover, $p_i$ is holomorphic on an open, full-measure subset $W'_i\subset W_i$. If $p_i$ is not identically 0 on $W'_i$, then (by \cref{lem:ae}) \cref{alg:slocc-odd} works for general $\Phi\in\cH_n$ as claimed. To show that $p_i\neq 0$, it suffices to exhibit $\Phi\in\cH_n$ such that, for all $1\leq i\leq n$,
\begin{enumerate}
    \item[1.] $M_{i}=\Phi_{(i)}J^{\otimes (n-1)}\Phi_{(i)}^\top$ is a multiple of the identity, and
    \item[2.] $\Phi_{(i)}\Phi_{(i)}^\top$ has distinct eigenvalues.
\end{enumerate}
The first condition implies that $M_i$ is invertible, $(M_i)_{11}\neq 0$, and $\Psi=(L_1\otimes\dots\otimes L_n)\Phi=(I\otimes\dots\otimes I)\Phi=\Phi$. Then the second condition implies $p_i(\Psi)\neq 0$.

In the calculations to follow, let $n=2k+1$. Given $\textbf{v}=(\textbf{v}_1,\dots,\textbf{v}_n)\in\{0,1\}^n$ let $|\textbf{v}|$ be the number of nonzero entries in $\textbf{v}$. Define $\bar{\textbf{v}}\in \{0,1\}^n$ by
\[
\bar{\textbf{v}}_i=\begin{cases}
    0 & \text{if $\textbf{v}_i = 1$,}\\
    1 & \text{if $\textbf{v}_i = 0$.}
\end{cases}
\]
The \defi{Hamming distance} between $\textbf{v}_1,\textbf{v}_2\in\{0,1\}^n$ is the number of occurrences where $(\textbf{v}_1)_i\neq (\textbf{v}_2)_i$ for $1\leq i\leq n$.
We claim that the following tensor satisfies properties 1 and 2 above:
\[
\Phi = (1-2^{2k-1})\ket{\textbf{0}}+\sum_{\textbf{v}\in\mathcal{E}\setminus\{\textbf{0}\}}\ket{\textbf{v}}
\]
where $\mathcal{E}$ is the set of vectors $\textbf{v}\in\{0,1\}^{n}$ such that $|\textbf{v}|$ is even. Splitting terms of $\Phi$ into two groups depending on whether the first tensor factor is $\ket{0}$ or $\ket{1}$, we write
\[
\Phi = \sum_{\textbf{w}\in\{0,1\}^{2k}} a_\textbf{w} \ket{0}\otimes\ket{\textbf{w}} + \sum_{\textbf{w}\in\{0,1\}^{2k}} b_\textbf{w} \ket{1}\otimes\ket{\textbf{w}}
\]
where each $a_{\textbf{w}},b_{\textbf{w}}$ is equal to $0$, $1$, or $1-2^{2k-1}$. Then the mode-1 flattening reads
\[
\Phi_{(1)}=\begin{pmatrix}
\textbf{a}^\top \\
\textbf{b}^\top
\end{pmatrix}=
\begin{pmatrix}
    a_{\textbf{w}_1} & a_{\textbf{w}_2} & \dots & a_{\textbf{w}_N} \\
    b_{\textbf{w}_1} & b_{\textbf{w}_2} & \dots & b_{\textbf{w}_N}
\end{pmatrix}
\]
where $N=2^{2k}$. The images of the reduction maps are
\[
\Phi_{(1)}^\top J^{\otimes 2k}\Phi_{(1)}=\begin{pmatrix}
    \textbf{a}^\top J^{\otimes 2k}\textbf{a} & \textbf{a}^\top J^{\otimes 2k}\textbf{b} \\
    \textbf{a}^\top J^{\otimes 2k}\textbf{b} & \textbf{b}^\top J^{\otimes 2k}\textbf{b} \\
\end{pmatrix}\quad\text{and}\quad
\Phi_{(1)}^\top \Phi_{(1)}=\begin{pmatrix}
    \textbf{a}^\top\textbf{a} & \textbf{a}^\top\textbf{b} \\
    \textbf{a}^\top\textbf{b} & \textbf{b}^\top\textbf{b} \\
\end{pmatrix}.
\]
The operator $J$ maps $\ket{0}\mapsto -\ket{1}$ and $\ket{1}\mapsto\ket{0}$. It follows that $J^{\otimes 2k}\ket{\textbf{w}}=(-1)^{|\bar{\textbf{w}}|} \ket{\bar{\textbf{w}}}$ and we find
\begin{align*}
\textbf{a}^\top J^{\otimes 2k}\textbf{b}&=\left(\sum_{i=1}^N a_{\textbf{w}_i}\ket{\textbf{w}_i}\right)^\top J^{\otimes 2k}\left(\sum_{j=1}^N b_{\textbf{w}_j}\ket{\textbf{w}_j}\right) \\
&= \left(\sum_{i=1}^N a_{\textbf{w}_i}\ket{\textbf{w}_i}\right)^\top\left(\sum_{j=1}^N b_{{\textbf{w}}_j} (-1)^{|\bar{\textbf{w}}_j|}\ket{\bar{\textbf{w}}_j}\right)\\
&=\sum_{\textbf{w}\in\{0,1\}^{2k}} (-1)^{|\textbf{w}|}a_{\textbf{w}}b_{\bar{\textbf{w}}}\;.
\end{align*}
Notice that $\ket{0}\otimes\ket{\textbf{w}}$ and $\ket{1}\otimes\ket{\textbf{w}}$ differ in one tensor factor, whereas the Hamming distance between points in $\mathcal{E}$ is even. Then one of $a_\textbf{w}$ and $b_\textbf{w}$ must be 0, hence $\textbf{a}^\top\textbf{b}=\sum a_\textbf{w}b_\textbf{w}=0.$ Since $\textbf{w}$ has $2k$ entries, $|\textbf{w}|$ and $|\bar{\textbf{w}}|$ have the same parity. Thus, one of $a_\textbf{w}$ and $b_{\bar{\textbf{w}}}$ is 0 and we similarly have $\textbf{a}^\top J^{\otimes 2k}\textbf{b}=\sum (-1)^{|\textbf{w}|}a_{\textbf{w}}b_{\bar{\textbf{w}}}=0$. This shows that $\Phi_{(1)}^\top  J^{\otimes 2k} \Phi_{(1)}^\top$ and $\Phi_{(1)}^\top \Phi_{(1)}^\top$ are diagonal. Note that, if $\textbf{w}\neq \textbf{0}$, then $a_\textbf{w}$ equals $1$ if $|\textbf{w}|$ is even and $0$ otherwise. Thus we compute
\begin{align*}
\textbf{a}^\top J^{\otimes 2k}\textbf{a}&=\sum (-1)^{|\textbf{w}|} a_\textbf{w} a_{\bar{\textbf{w}}} \\
&=\sum a_\textbf{w} a_{\bar{\textbf{w}}} \\
&=2(a_{\textbf{0}}a_{\bar{\textbf{0}}})+\sum_{\textbf{w}\in\{0,1\}^{2k}\setminus\{\textbf{0},\bar{\textbf{0}}\}}a_\textbf{w} a_{\bar{\textbf{w}}}\\
&= 2(1-2^{2k-1})+(2^{2k-1}-2).
\end{align*}
Similarly, $b_\textbf{w}$ equals $1$ if $|\textbf{w}|$ is odd and $0$ otherwise, so
\begin{align*}
    \textbf{b}^\top J^{\otimes 2k}\textbf{b}&=\sum (-1)^{|\textbf{w}|} b_\textbf{w} b_{\bar{\textbf{w}}}=-2^{2k-1}.
\end{align*}
It follows that $\textbf{a}^\top J^{\otimes 2k}\textbf{a}=\textbf{b}^\top J^{\otimes 2k}\textbf{b}$ so that $\Omega_{(1)}^\top J^{\otimes 2k} \Omega_{(1)}^\top$ is a multiple of the identity. Additionally, we have
\[
\textbf{a}^\top\textbf{a}=\sum a_\textbf{w}^2=(2^{2k-1}-1)^2+(2^{2k-1}-1) ,\quad\text{and}\quad
\textbf{b}^\top\textbf{b}=2^{2k-1}.
\]
which gives $\textbf{a}^\top\textbf{a}-\textbf{b}^\top\textbf{b}=(2^{2k-1}-1)^2 -1$. Thus $\textbf{a}^\top\textbf{a}\neq\textbf{b}^\top\textbf{b}$ so that $\Phi_{(1)}^\top  \Phi_{(1)}^\top$ has distinct eigenvalues when $k\geq 2$ or $n=2k+1\geq 5$. We have equality $\textbf{a}^\top\textbf{a}=\textbf{b}^\top\textbf{b}$ when $k=1$ or $n=3$. Due to the symmetry of the tensor $\Phi$, the same calculations apply for flattenings $\Phi_{(i)}$ with $i>1$. This concludes the proof of \cref{lem:gen}.

\subsubsection{The 3-qubit case}\label{sec:3f} Using the methods we have developed, we now examine the orbit classification problem for the SLOCC group action on $\cH_3$. Suppose $\Phi\in\cH_3$ such that $\Phi_{(i)}(J\otimes J)\Phi_{(i)}^\top$ is invertible for $i=1,2,3$. By \cref{lem:preodd}, there exists $\Psi$ in the SLOCC orbit of $\Phi$ such that $\Psi_{(i)}(J\otimes J)\Psi_{(i)}^\top$ is a multiple of the identity $\forall i$. Running the first two steps of \cref{alg:slocc-odd} on a computer with randomly selected $\Phi$, we find that the output $\Psi=av_1+bv_2$ is always a complex linear combination of the tensors
\[
v_1=\ket{001}+\ket{010}+\ket{100}-\ket{111},\quad
v_2=\ket{101}+\ket{110}-\ket{000}+\ket{011}.
\]
A direct computation shows that 
    \[
    U(z_1)\otimes U(z_2)\otimes U(z_3)
    (av_1+bv_2)=\begin{pmatrix} v_1 & v_2\end{pmatrix}U(z_1+z_2+z_3)\begin{pmatrix} a \\ b\end{pmatrix},
    \]
    where $U(z)=\begin{pmatrix}
        \cos z & -\sin z \\
        \sin z & \cos z
    \end{pmatrix}$ for $z\in\CC$. Therefore the $\SO_2^{\times 3}$-representation $\cH_3$ contains a subrepresentation
    $
\SO_2^{\times 3}\to\SO_2\to\GL(\{v_1,v_2\}).
    $ Since $\SO_2$ is abelian, the subrepresentation splits further into two 1-dimensional subrepresentations which correspond to the simultaneous eigenvectors $w_1=v_1+\im v_2$ and $w_2=v_1-\im v_2$ associated to characters $e^{-\im z}$ and $e^{\im z}$ respectively, i.e.
\[
U(z)\begin{pmatrix} 1 \\ \im \end{pmatrix}=e^{-\im z}\begin{pmatrix} 1 \\ \im \end{pmatrix}\quad\text{and}\quad U(z)\begin{pmatrix} 1 \\ -\im \end{pmatrix}=e^{\im z}\begin{pmatrix} 1 \\ -\im \end{pmatrix},\quad\forall z\in\mathbb{C}.
\]  
   Given $\alpha,\beta\in\CC$ we can set $z=(\arg \alpha+\arg \beta)/2+\im \,\ln\sqrt{|\alpha\beta|}$ so that 
   \[e^{-\im z} \alpha = e^{\im z}\beta=\sqrt{|\alpha\beta|}\exp(\im (\arg\alpha + \arg\beta)/2).
   \] Then $I_2\otimes I_2\otimes U(z)(\alpha w_1+\beta w_2)$ is a scalar multiple of $v_1$. Hence we obtain normal forms for almost all tensors constituting the line
   \[
   av_1 = a(\ket{001}+\ket{010}+\ket{100}-\ket{111}),\quad a\in \CC.
   \]
The above is true for the specific case when $\Phi=\frac{1}{\sqrt{2}}(\ket{000}+\ket{111})$ is the GHZ state. Thus we recover the fact that almost all states in $\mathbb{P}\cH_3$ are SLOCC equivalent to the GHZ state. This normal form for the GHZ state also appears in the Freudenthal triple system classification from \cite{Borsten_2013}.

\begin{table}[h]
    \begin{center}
    \bgroup
\def\arraystretch{1.1}
\begin{tabular}{|c||c|c|c|c|c|c|}
    Normal form of $\Phi$ & rank $\pi_1(\Phi)$ & rank $\pi_2(\Phi)$ & rank $\pi_3(\Phi)$\\
    \hline
    $\ket{000}+\ket{111}$ & 2 & 2 & 2\\
    $\ket{001}+\ket{010}+\ket{100}$ & 1 & 1 & 1 \\
    $\ket{001} + \ket{111}$ & 0 & 0 & 1 \\
    $\ket{010} + \ket{111}$ & 0 & 1 & 0 \\
    $\ket{100} + \ket{111}$ & 1 & 0 & 0 \\
    $\ket{000}$ & 0 & 0 & 0 
\end{tabular}
\egroup
\end{center}
    \caption{Ranks of the reduction maps for 3 qubits.}
    \label{fig:tracing3qubits}
\end{table}

It is known that there are six SLOCC equivalence classes in $\PP \cH_3$ with normal forms listed in \cref{fig:tracing3qubits}. We find that the ranks of the matrices $\pi_i(\Phi)=\Phi_{(i)}(J\otimes J)\Phi_{(i)}^\top$ for $i=1,2,3$ is enough to distinguish SLOCC orbits in $\mathbb{P}\cH_3$. It is known that ranks of flattenings and the hyperdeterminant also distinguish these orbits. The multilinear ranks separate all but the top two orbits. The top two orbits both have multilinear rank (2,2,2) and the (non)-vanishing of the hyperdeterminant separates these two (see \cite[Ex.~4.5, p.~478]{GKZ}). 

\section*{Acknowledgements}
The authors recognize partial support from the Thomas Jefferson Foundation, CNRS, and UTBM. The authors would also like to thank Frédéric Holweck and Nick Vannieuwenhoven for useful discussions. We thank the anonymous referees who provided useful feedback.

\newcommand{\arxiv}[1]{\href{http://arxiv.org/abs/#1}{{\tt arXiv:#1}}}
 \bibliographystyle{siamplain}
\bibliography{ref}
 \end{document}